\title{An Efficient Data Structure and Algorithm for Long-Match Query in Run-Length Compressed BWT}
\titlerunning{Long LEM Query in BWT-Runs Space} 
\author{Ahsan Sanaullah}{Department of Computer Science, University of Central Florida, Orlando, FL, USA}{ahsan.sanaullah@ucf.edu}{}{}
\author{Degui Zhi}{McWilliams School of Biomedical Informatics, University of Texas Health Science Center at Houston, Houston, TX, USA}{Degui.Zhi@uth.tmc.edu}{0000-0001-7754-1890}{}
\author{Shaojie Zhang}{Department of Computer Science, University of Central Florida, Orlando, FL, USA}{shzhang@cs.ucf.edu}{0000-0002-4051-5549}{}
\authorrunning{A. Sanaullah, D. Zhi, and S. Zhang} 
\keywords{BWT, LEM, Long LEM, MEM, Run Length Compressed BWT, Move Data Structure, Pangenome} 
\begin{document}

\maketitle

\begin{abstract}
String matching problems in bioinformatics are typically for finding exact substring matches between a query and a reference text. Previous formulations often focus on maximum exact matches (MEMs). However, multiple occurrences of substrings of the query in the text that are long enough but not maximal may not be captured by MEMs. Such long matches can be informative, especially when the text is a collection of similar sequences such as genomes. In this paper, we describe a new type of match between a pattern and a text that aren't necessarily maximal in the query, but still contain useful matching information: locally maximal exact matches (LEMs). There are usually a large amount of LEMs, so we only consider those above some length threshold $\mathcal{L}$. These are referred to as long LEMs. The purpose of long LEMs is to capture substring matches between a query and a text that are not necessarily maximal in the pattern but still long enough to be important. Therefore efficient long LEMs finding algorithms are desired for these datasets. However, these datasets are too large to query on traditional string indexes. Fortunately, these datasets are very repetitive. Recently, compressed string indexes that take advantage of the redundancy in the data but retain efficient querying capability have been proposed as a solution. We therefore give an efficient algorithm for computing all the long LEMs of a query and a text in a BWT runs compressed string index. We describe an $O(m+occ)$ expected time algorithm that relies on an $O(r)$ words space string index for outputting all long LEMs of a pattern with respect to a text given the matching statistics of the pattern with respect to the text. Here $m$ is the length of the query, $occ$ is the number of long LEMs outputted, and $r$ is the number of runs in the BWT of the text. The $O(r)$ space string index we describe relies on an adaptation of the move data structure by Nishimoto and Tabei. We are able to support $LCP[i]$ queries in constant time given $SA[i]$. In other words, we answer $PLCP[i]$ queries in constant time. Long LEMs may provide useful similarity information between a pattern and a text that MEMs may ignore. This information is particularly useful in pangenome and biobank scale haplotype panel contexts.
\end{abstract}

\clearpage

\section{Introduction}\label{introsect}
Bioinformatics sequence data is often large and very repetitive. Furthermore, efficient matching queries on the data are frequently needed for many biological analyses. Therefore, bioinformatics problems have incentivized and profited from the development of efficient string indexes. The Burrows-Wheeler transform (BWT) has thus been used in bioinformatics algorithms. The BWT is a permutation of a text that has found wide use in string indexing and data compression~\cite{burrows1994block}. Position $i$ in the BWT of the text is essentially the character behind the $i$-th lexicographically smallest suffix of the text. Therefore, adjacent characters of the BWT correspond to highly locally similar suffixes of the text. Therefore, the BWT of highly repetitive texts tends to have large runs of one character, with an overall small number of runs. The BWT of highly repetitive texts therefore compresses well. In fact, the number of runs in BWT, $r$, is sometimes used as a measure of the repetitiveness of a string~\cite{navarropart1_10.1145/3434399}. Finally, given only the BWT of a text, the text can be reconstructed in linear time~\cite{burrows1994block} and the BWT of a text can be constructed in linear time by construction of the suffix array~\cite{KO2005143}. The BWT ordering also allows efficient string indexes. In other words, given a pattern, find all occurrences of the pattern within the text. String indexes have been shown that output all occurrences of a pattern (a locate query) in space linear to the product of the length of the text and the size of the alphabet and time linear to the sum of the length of the pattern and the number of occurrences~\cite{FMIndex10.1145/1082036.1082039}.

Compressed string indexes have also been shown~\cite{FMIndex10.1145/1082036.1082039,10.1007/978-3-319-19929-0_3,doi:10.1089/cmb.2009.0169}. These indexes output all occurrences of a pattern in space sublinear to the size of the text. Although the time complexity of locating these occurrences is not linear in the length of the pattern and the number of occurrences, they are typically independent of the length of the text barring logarithmic factors and close to linear in the length of the pattern and number of occurrences. In particular for highly repetitive texts, the space of the index can be much smaller than the space of the text. Notably, recent compressed string indexes have achieved space linear to the number of runs in the BWT ($r$)~\cite{rindex10.1145/3375890,nishimoto_et_al:LIPIcs.ICALP.2021.101}. The r-index by Gagie et al. was the first compressed string index offering close to linear time locate queries in $O(r)$ space~\cite{rindex10.1145/3375890}. Nishimoto and Tabei recently improved on this result with their OptBWTR, which achieves linear time locate queries for texts with alphabets of size polylogarithmic in the length of the text. OptBWTR relies on the move data structure, which was also introduced by Nishimoto and Tabei~\cite{nishimoto_et_al:LIPIcs.ICALP.2021.101}.

Compressed string indexes have been fruitfully applied to the growing collection of bioinformatics data. Over the past two decades, large collections of genomics data have grown increasingly larger in size. For example, the UK Biobank has whole genome sequencing data of roughly one million haplotypes~\cite{UKBLi2023.12.06.23299426}, and the All of Us program has released whole genome sequencing data of half a million haplotypes~\cite{all2024genomic}. Furthermore, recent arguments have been made that a human reference pangenome should be used instead of a singular human reference genome to avoid reference bias in downstream analyses~\cite{miga2021need,taylor2024beyond,singh2022reference}. The Human Pangenome Reference Consortium has released a draft human pangenome reference of more than two hundred high quality phased diploid assemblies and is planning to release over three hundred and fifty in the final release~\cite{liao2023draft,draftpangenomerelease2}. The UK Biobank whole genome sequencing data has 1.5 billion variants, the All of Us whole genome sequencing data has 1 billion variants, and the typical diploid assembly in the draft human pangenome has 6 billion bases. Therefore, these datasets have 1,500 trillion, 250 trillion, and 1.2 trillion characters each respectively. However, while very large, these datasets are very repetitive. Furthermore, queries on these datasets are frequently needed for biological applications including read mapping\cite{bwa10.1093/bioinformatics/btp324,langmead2012fast}, read alignment~\cite{10.1093/bioinformatics/btp698}, read classification for metagenomes~\cite{song2024centrifuger,ahmed2023spumoni,Depuydt2025.02.25.640119} or pangenomes~\cite{nate10.1007/978-3-031-90252-9_12}. Many general purpose compressed string indexes have also been implemented for exact pattern matching and matching substrings of the pattern~\cite{monidoi:10.1089/cmb.2021.0290,zakeri2024movi,bmovedepuydt_et_al:LIPIcs.WABI.2024.10,ropebwt310.1093/bioinformatics/btae717}. These indexes may compute the maximal exact matches (MEMs) of the pattern with respect to the text. MEMs are matches between the pattern and the text that cannot be extended in the pattern. 

While MEMs typically refers to matches that are maximal in the pattern, matches that are simultaneously maximal in the pattern and the text may sometimes be desired. Notably, in two data structures related to the BWT, algorithms for outputting matches that are simultaneously maximal in the pattern and the text have already been developed. These data structures are the positional Burrows-Wheeler transform (PBWT) and the graph Burrows-Wheeler transform (GBWT)~\cite{PBWT,GBWT}. In the PBWT and GBWT, matches that cannot be extended in the pattern are referred to as set maximal matches, and matches that cannot be simultaneously extended in the pattern are referred to as locally maximal matches. Locally maximal matches that are longer than some length threshold $\mathcal{L}$ are referred to as $\mathcal{L}$-long matches, or long matches for short. Algorithms for outputting set maximal matches and long matches have been published in the PBWT in uncompressed~\cite{PBWT,naseri10.1093/bioinformatics/btz347,dpbwt} and compressed space~\cite{mupbwt10.1093/bioinformatics/btad552,dynmupbwt10.1007/978-3-031-90252-9_13,sylpbwt10.1093/bioinformatics/btac734,smempbwt10.1007/978-3-031-43980-3_8}. Algorithms for outputting these matches have also been published for the GBWT in compressed space~\cite{gbwtquerySanaullah2025.02.03.634410}. 

In this paper, we use these concepts in the traditional pattern and text context, and name matches that cannot be simultaneously extended in the pattern and the text locally maximal exact matches (LEMs). LEMs that are longer than $\mathcal{L}$ are long LEMs. The distinction between matches that do not extend in the pattern and matches that do not extend simultaneously in the pattern and the text has been made before. Notably, in ropebwt3 MEMs refers to LEMs of our paper and super maximal exact matches (SMEMs) refers to MEMs of our paper~\cite{ropebwt310.1093/bioinformatics/btae717}. The term SMEM has been used in place of MEM in a few papers to avoid the confusion in terminology~\cite{smempbwt10.1007/978-3-031-43980-3_8,Depuydt2025.02.25.640119,ropebwt310.1093/bioinformatics/btae717,mupbwt10.1093/bioinformatics/btad552}, however MEM is still the most common term by far for matches that cannot be extended in the pattern. 

In this work, we describe an algorithm for outputting all long LEMs of a pattern with respect to a text in $O(m+occ)$ expected time given the matching statistics of the pattern with respect to the text, where $m$ is the length of the pattern and $occ$ is the number of long LEMs it has with respect to the text. In order to do so, we modify the OptBWTR data structure of Nishimoto and Tabei to also compute $LCP[i]$ given $SA[i]$ (i.e. compute $PLCP$). We name this modified OptBWTR, \textit{OptBWTRL} (i.e. OptBWTR for long LEMs or OptBWTR with LCP). OptBWTRL maintains the $O(r)$ words space complexity of OptBWTR and computes $\phi[i]$ and $PLCP[i]$ in constant time. The long LEM finding algorithm also requires as input an OptBWTRL of the text. We also discuss possible future work related to this paper, including avenues for improving the results, utilization of constant time $PLCP$ computation to speed up matching statistics computation, and biological applications of long LEMs. Long LEMs may have many biological applications, from identity by descent segment detection and local ancestry inferences, to seeds or anchors for approximate matching algorithms for genome to genome alignment, genome to pangenome, read to genome or other alignments. In this paper, our main contributions are the following:

\begin{itemize}
    \item \textbf{OptBWTRL:} OptBWTRL is an $O(r)$ words space data structure that maintains the capabilities of OptBWTR and adds the ability to compute $\phi,PLCP,$ and long LEMs efficiently. $r$ is the number of runs in the BWT of the text.
    \begin{itemize}
        \item \textbf{PLCP:} OptBWTRL enables constant time $PLCP[i]$ computation in $O(r)$ space. Note that $PLCP[SA[j]]=LCP[j]$, therefore $PLCP$ computation in constant time allows $LCP[j]$ computation in constant time given $SA[j]$.
        \item \textbf{Long LEM Query:} We describe an $O(m+occ)$ expected time long LEM query for pattern $P$ and text $T$ given the matching statistics of $P$ with respect to $T$. The underlying index (OptBWTRL) uses $O(r)$ space. $m$ is the length of $P$ and $occ$ is the number of long LEMs $P$ has with respect to $T$. A deterministic time bound for a similar algorithm we show is $O\left(m+occ\sqrt{\frac{\log occ}{\log\log occ}}\right)$. 
    \end{itemize}
    \item \textbf{Long LEM Query with random access to the text:} Given $O(t_{RA})$ time random access to the text and a BWT related index, algorithms for computing matching statistics efficiently are known. Therefore, our long LEM query algorithm results in the following.
    \begin{itemize}
        \item \textbf{In Uncompressed Space:} An algorithm for long LEM query in $O(m+occ)$ expected time in uncompressed string indexes such as the FM Index (Corollary 3.2, variant with $O(n\sigma)$ space, where $n$ is the length of the text and $\sigma$ is the size of the alphabet)~\cite{FMIndex10.1145/1082036.1082039}.
        \item \textbf{In Compressed Space:} An algorithm for long LEM query in $O(m\log\frac{n}{\delta}+occ)$ expected time in $O(r+\delta\log\frac{n}{\delta})$ space given a block tree~\cite{delta10.1007/978-3-030-61792-9_17,blocktreeBELAZZOUGUI20211} (with random access to the text in $O(\log\frac{n}{\delta})$ time in $O(n\log\frac{n}{\delta})$ space) and an OptBWTRL of the text. 
    \end{itemize}
\end{itemize}

\section{Background}
In this section, we review definitions used throughout the rest of the paper. We begin with strings, then in~\Cref{bwtback}, we review BWT related concepts. In~\Cref{nishback}, we give a short overview of the results of Nishimoto and Tabei in~\cite{nishimoto_et_al:LIPIcs.ICALP.2021.101}. Matching statistics are reviewed in~\Cref{msback}. Finally we review maximal exact matches (MEMs) and define locally maximal exact matches (LEMs) in~\Cref{lemback}. 

Let $\Sigma = \{1,2,3,\dots,\sigma\}$ be an ordered alphabet of size $\sigma$. The size (number of characters it contains) of a string $T$ is represented by $|T|$. $T$ refers to a text of length $n$ ($|T| = n$) where the last character is $\$$. The character $\$$ is lexicographically smaller than all other characters in $T$ and occurs only in the last position of $T$. The $i$-th character of $T$ is $T[i]$, $i \in [1,n]$. $T[i,j]$ refers to the substring of $T$ that starts at position $i$ and ends at position $j$, inclusive ($T[i,j] = T[i]T[i+1]T[i+1]\dots T[j]$). Prefix $i$ of $T$ is the string $T[1,i]$, suffix $i$ of $T$ is $T[i,n]$. The longest common prefix of two strings $T$ and $T'$ is referred to by $lcp(T, T')$. $|lcp(T,T')|$ is the largest value $i$ s.t. $i\leq\min(|T|,|T'|)$ and $T[1,i] = T'[1,i]$ (then, $lcp(T,T') = T[1,i] = T'[1,i]$). A string $T'$ being lexicographically smaller than $T$ is represented by $T'\prec T$. If $T' = T$, $T'\nprec T$ and $T\nprec T'$. If $T' \neq T$, $T'\prec T$ iff $T' = lcp(T, T')$ or $T'[|lcp(T,T')|+1] < T[|lcp(T,T')|+1]$. 

\subsection{Burrows-Wheeler Transform}\label{bwtback}
The Suffix Array ($SA$) of a text $T$ is an array of length $n = |T|$ where the $i$-th position stores the index of the $i$-th lexicographically smallest suffix of $T$. Therefore, $T[SA[1],n] \prec T[SA[2],n] \prec T[SA[3],n] \prec \dots \prec T[SA[n],n]$. The Burrows-Wheeler Transform (BWT) of a text $T$ is a string of length $n$ where the $i$-th character in the string is the $SA[i]-1$-th character of $T$ (the $n$-th character if $SA[i] = 1$). The $LF$ array is an array of length $n$ that stores the position of the previous suffix in the suffix array, $LF[i] = j$ s.t. $SA[j] = SA[i] - 1$ for all $SA[j]\in[1,n-1]$, $LF[i]=j$ s.t. $SA[j] = n$ for $SA[i] = 1$. The $\phi$ array stores at position $i$, the suffix above suffix $i$ in the suffix array, i.e. if $SA[k] = i$, $\phi[i] = SA[k-1]$ ($\phi[i] = SA[n]$ if $i = SA[1]$). The $\phi^{-1}$ array stores at position $i$, the suffix below suffix $i$ in the suffix array, i.e. if $SA[k] = i$, $\phi^{-1}[i] = SA[k+1]$ ($\phi^{-1}[i] = SA[1]$ if $i = SA[n]$). Therefore, $\phi[\phi^{-1}[i]] = i$ and $\phi^{-1}[\phi[i]] = i$. The $LCP$ array is an array of length $n$ where $LCP[i]$ stores the length of the longest common prefix of suffix $SA[i]$ and $SA[i-1]$. $LCP[1] = 0$ and for $i\in[2,n]$, $LCP[i] = |lcp(T[SA[i],n], T[SA[i-1],n])|$. The $PLCP$ (permuted $LCP$) array is an array of length $n$ where the $LCP$ array is stored by suffix index. Therefore, if $SA[i] = j$, $PLCP[j] = LCP[i]=|lcp(T[j,n],T[\phi[j],n])|$. Finally, the inverse suffix array, $ISA$, is an array of length $n$ that stores at position $i$ the position of suffix $i$ in the suffix array, if $ISA[i] = j$, $SA[j] = i$. $ISA[SA[i]] = i$ and $SA[ISA[i]] = i$. The $SA, LF, \phi, \phi^{-1},$ and $ISA$ arrays are permutations of the integers in $[1,n]$. $SA$ and $ISA$ are inverses of each other and $\phi$ and $\phi^{-1}$ are inverses of each other.

The run-length burrows-wheeler transform (RLBWT) is the run-length encoding of the BWT of a text. Call $L$ the BWT of text $T$. Then, $L$ is partitioned into $r$ nonempty substrings $L_1,L_2,\dots,L_r$. $L_i$ is a substring of $L$ corresponding to the $i$-th run of $L$. A run is a maximal repetition of the same character in $L$. Therefore, $L_i[1] = L_i[2] = \dots = L_i[|L_i|]$ for all $i\in[1,r]$ and $L_i[1] \neq L_{i+1}[1]$ for all $i\in[1,r-1]$. $l_i$ is the starting position of the run $L_i$ in $L$. The RLBWT is represented as $r$ pairs $(L_i[1], l_i)$ for $i\in[1,r]$. All of these structures can be seen in~\cref{figBWT} for a text $T=missisismississippi\$$.

\begin{figure}
    \centering
    \includegraphics[width=\linewidth]{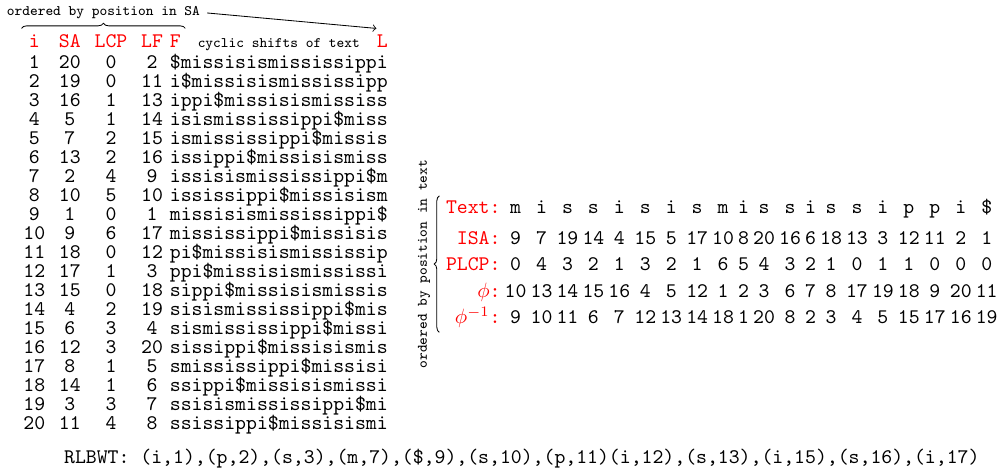}
    \caption{BWT and related structures for a text $T=missisismississippi\$$. $SA,LCP,LF,F,$ and $L$ are orderd by position in $SA$ while $ISA,PLCP,\phi,$ and $\phi^{-1}$ are ordered by position in the text. }
    \label{figBWT}
\end{figure}

\subsection{Move Data Structure}\label{nishback}
The move data structure is a data structure for representing a permutation of a contiguous range of integers efficiently. It was introduced by Nishimoto and Tabei~\cite{nishimoto_et_al:LIPIcs.ICALP.2021.101}. In the original introduction, the structure was described for a permutation of $[1,n]$. This of course may be extended to any bijective function from a contiguous range of integers to another contiguous range of integers. The move data structure takes space proportional to the number of intervals conserved in the function. An interval is conserved in a bijective function from a contiguous range of integers to another contiguous range of integers if for any $i,j$ in the interval, $f(i) - f(j) = i - j$ (therefore, $f(i) = f(j) + i - j$ and $f(i)-i = f(j)-j$). The move data structure computes the represented function in constant time. The important arrays, $LF$ and $\phi^{-1}$, are permutations of $[1,n]$ with $O(r)$ conserved intervals, where $r$ is the number of runs in the BWT. Therefore, Nishimoto and Tabei define the OptBWTR data structures using move data structures. OptBWTR  supports efficient count and locate queries in BWT-runs compressed space. Below, we more formally review some of the results from their paper~\cite{nishimoto_et_al:LIPIcs.ICALP.2021.101}.

\subsubsection{Disjoint Interval Sequence}
$I=(p_1,q_1),(p_2,q_2),\dots,(p_k,q_k)$ is a sequence of $k$ pairs of integers. Let $p_{k+1} = n+1$. Then $i$ is a \textit{disjoint interval sequence} iff there exists a permutation $\pi$ of $[1,k]$ s.t. (i) $p_1 = 1 < p_2 < \dots < p_k \leq n$, (ii) $q_{\pi[1]} = 1$, and (iii) $q_{\pi[i]} = q_{\pi[i-1]} + (p_{\pi[i-1]+1} - p_{\pi[i-1]})$. $[p_i, p_{i+1} - 1]$ is referred to as the $i$-th input interval, and $[q_{i}, q_{i} + (p_{i+1}-p_i) - 1]$ as the $i$-th output interval. The input intervals don't overlap, and their union is $[1,n]$. The output intervals don't overlap and their union is $[1,n]$.

A \textit{move query} on a disjoint interval sequence $I$ takes as input $(i, x)$, where $i$ is an index in $[1,n]$ and $x$ is the index of the input interval sequence that contains it, $i\in[1,n]$ and $p_x \leq i < p_{x+1}$ and $x\in[1,k]$. The move query outputs $(i', x')$ where $i' = q_x + (i - p_x)$ and $p_{x'} \leq i' < p_{x'+1}$, i.e. $i'$ is the mapping of position $i$ from the input to output intervals by $I$ and $x'$ is the index of the input interval that contains $i'$.

$f$, a permutation of $[1,n]$ with $k$ conserved intervals, can be represented by a disjoint interval sequence where the input intervals are the conserved intervals and the output intervals are the mapping of the input intervals by $f$. Then, a move query of $(i,x)$ returning $(i',x')$ computes $f$ by $f(i) = i'$.

Nishimoto and Tabei show that a move queries on a disjoint interval sequence of $k$ input intervals (and therefore $k$ output intervals) can be computed in constant time and $O(k)$ space with the move data structure. The move data structure is built by splitting the $k$ input intervals of $I$ into at most $2k$ intervals. This results in a disjoint interval sequence of at most $2k$ input intervals (and an equivalent number of output intervals) that represents the same permutation as the original disjoint interval sequence. The split interval sequence of $i$ that the move data structure is built on is referred to as a balanced interval sequence. The notation for a balanced interval sequence of $I$ is $B(I)$, and the notation for a move data structure of $I$ is $F(I)$. In this paper, we occasionally use input interval of $F(I)$ as shorthand for input interval of $B(I)$ (for example, $i$-th input interval of a move data structure refers to the $i$-th input interval of the balanced interval sequence it was built on). Brown et al. extend the balanced interval sequence result of Nishimoto and Tabei to splitting $I$'s $k$ intervals into at most $k+\frac{k}{d-1}$ intervals, resulting in a move data structure with $O(d)$ time move query computation for any $d\geq 2$~\cite{brown_et_al:LIPIcs.SEA.2022.16}.  

\subsubsection{OptBWTR}
The arrays $LF$ and $\phi^{-1}$ are permutations of $[1,n]$ with $O(r)$ conserved intervals. For $LF$, a conserved interval is within a run in the BWT. For $\phi^{-1}$, a conserved interval is a range of suffixes of $T$ that don't occur at the bottom of a run in the BWT (except the first position of the interval may be at the bottom of a run). Therefore, Nishimoto and Tabei define the OptBWTR data structure as the combination of the move data structures of the $LF$ and $\phi^{-1}$ functions along with a rank-select data structure on an $O(r)$ length string $L_{first}$. OptBWTR supports $O(m\log\log_w\sigma)$ time count queries and $O(m\log\log_w\sigma + occ)$ time locate queries in $O(r)$ words of space, where $r$ is the number of runs in the BWT of the text, $m$ is the length of the pattern, $occ$ is the number of occurrences of the pattern in the text, $w$ is the word size, $\sigma$ is the size of the alphabet (Theorem 9 of~\cite{nishimoto_et_al:LIPIcs.ICALP.2021.101}). The input intervals of $B(I_{LF})$, the disjoint interval sequence of the move data structure of $LF$, are contained within a run in the BWT. Call the $i$-th input interval of $B(I_{LF})$ $[p_i, p_{i+1}-1]$. Then, $L_{first} = L[p_1]L[p_2]L[p_3]\dots L[p_k]$, where $k \leq 2r$ is the number of input intervals of $B(I_{LF})$, $L$ is the BWT of $T$, and $\forall i\in[1,k], j\in[p_i, p_{i+1}-1] L[j] = L[p_i]$. Call $B(I_{SA})$ the disjoint interval sequence of the move data structure of $\phi^{-1}$, and $[p^-_i, p^-_{i+1}]$ its $i$-th input interval. OptBWTR is composed of 
\begin{itemize}
    \item move data structures for $LF$ and $\phi^{-1}$ ($F(I_{LF})$ and $F(I_{SA})$ respectively),
    \item a rank-select data structure on $L_{first}$ ($R(L_{first})$),
    \item samples of the $SA$ at the beginning of input intervals of the LF move data structure ($SA^+$, where $SA^+[i] = SA[p_i]$), and
    \item the index of the input interval of the $\phi^{-1}$ move data structure that contains each $SA$ sample in $SA^+$ ($SA^+_{index}$, where $SA^+_{index}= y \iff SA^+[i]\in [p^-_y, p^-_{y+1}]$).
\end{itemize}

\subsection{Matching Statistics}\label{msback}
The matching statistics of a pattern $P$ with respect to a text $T$ represents information on the local similarity of the pattern to the text. The matching statistics of $P$ with respect to $T$, $\prescript{}{P}{MS}_T$, is an array of length $|P| = m$ that stores at position $i$ three values: $\prescript{}{P}{MS}_T[i].len$, $\prescript{}{P}{MS}_T[i].suff$, and $\prescript{}{P}{MS}_T[i].row$. $\prescript{}{P}{MS}_T[i].len$ is the length of the longest substring of $P$ starting at $i$ that occurs in $T$. $\prescript{}{P}{MS}_T[i].suff$ is a suffix of $T$ that has a longest common prefix with $P[i,m]$ of length $\prescript{}{P}{MS}_T[i].len$ (or equivalently, $\prescript{}{P}{MS}_T[i].suff$ is the starting position of an occurrence of $P[i,i+\prescript{}{P}{MS}_T[i].len-1]$ in $T$). $\prescript{}{P}{MS}_T[i].row$ is the index in the SA of $T$ that has value $\prescript{}{P}{MS}_T.suff$. Formally for all $i\in[1,m]$, 
\begin{itemize}
    \item $\prescript{}{P}{MS}_T[i].len = \max_{j\in[1,|T|]} |lcp(P[i,m], T[j,|T|])|$,
    \item $|lcp(T[\prescript{}{P}{MS}_T[i].suff, n], P[i, m])| = \prescript{}{P}{MS}_T[i].len$, and
    \item $SA[\prescript{}{P}{MS}_T[i].row] = \prescript{}{P}{MS}_T[i].suff$.
\end{itemize}
When $P$ and $T$ are clear from the context, we omit them from $\prescript{}{P}{MS}_T$ and refer to the matching statistics of $P$ with respect to $T$ as $MS$.
\subsection{Maximal and Locally Maximal Exact Matches}\label{lemback}
For a pattern $P$ and a text $T$ ($|P| = m, |T| = n)$, a maximal exact match (MEM), $P[i,j] = T[i',j']$, is a match between $P$ and $T$ that cannot be extended left or right in the pattern. Formally, ($i = 1$ or $P[i-1,j]$ doesn't occur in $T$) and ($j = m$ or $P[i, j+1]$ doesn't occur in $T$). A MEM can be fully specified by the triple $(i,i',k)$ where $k=j-i+1$ is the length of the match and $i$ and $i'$ are the starting positions of the match in the pattern and the text respectively. 

For a pattern $P$ and a text $T$, a locally maximal exact match (LEM), $P[i,j] = T[i',j']$, is a match between $P$ and $T$ that cannot be simultaneously extended in the pattern and the text. The match cannot be simultaneously extended left in the pattern and the text. Likewise, it cannot be simultaneously extended right in the pattern and the text. Formally, ($i = 1$ or $i' = 1$ or $P[i-1,j] \neq T[i'-1,j']$) and ($j = m$ or $j' = n$ or $P[i,j+1]\neq T[i',j'+1]$). A LEM can also be fully specified by the triple $(i,i',k)$ where $k$ is the length of the LEM and $k=j-i+1$. For some length threshold $\mathcal{L}$, a long LEM is a LEM with length at least $\mathcal{L}$. See~\Cref{memslemsfig} for a depiction of MEMs and LEMs in a text representing a pangenome.

\begin{figure}
    \centering
    \includegraphics[width=0.91\linewidth]{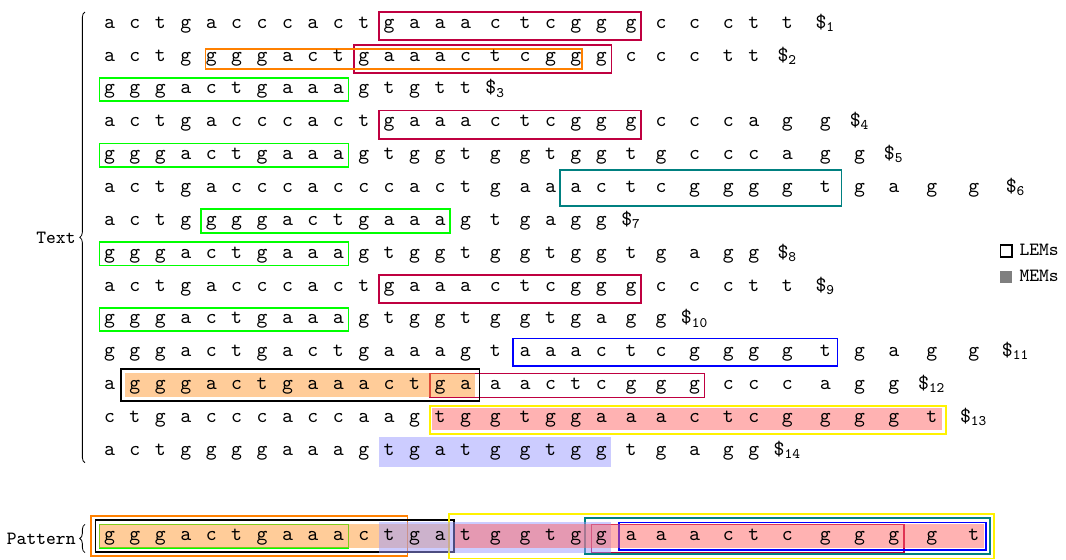}
    \caption{MEMs and LEMs of a pattern (haplotype) vs a text (pangenome). Here, haplotype $i$ is the sequence of characters between $\$_{i-1}$ and $\$_i$. The text is the concatenation of the haplotypes $T=``a c t g a c c c a c t g a a a c t c g g g c c c t t \$_1a c t g g g g a c t g a a a c t c g g g c c c t t \$_2\dots"$. MEMs and long LEMs of the pattern (a haplotype not contained in the pangenome) with respect to the text (the pangenome) are highlighted. MEMs are shaded in while LEMs are boxed in. The length threshold $\mathcal{L} = 10$ for the long LEMs. In this example, MEMs are only able to detect relationships among the haplotypes most closely related to the pattern haplotype. Haplotypes similar to the pattern but not maximally similar at any location remain undetected. Notably, haplotype 2 is very similar to the pattern but doesn't contain any MEMs with it. The number of undetected similar haplotypes in biobank scale haplotype panels may be an order of magnitude larger.
    }
    \label{memslemsfig}
\end{figure}

\section{Methods}
Here we describe the main results of our paper. In~\Cref{seclcp}, we prove move data structures can compute $\phi$ and $PLCP$ in constant time. Then we describe OptBWTRL, our modification of OptBWTR that utilizes these move data structures. In~\Cref{longsect}, we describe multiple algorithms for long LEM query provided an OptBWTRL of the text and matching statistics of the pattern with respect to the text.
\subsection{Computing LCP with Move Data Structures}\label{seclcp}
We define $p^+_j$ to be the $j$-th smallest suffix that occurs at the top of a run in the BWT. Therefore let (i) $p^+_1<p^+_2<\dots<p^+_r<p^+_{r+1}=n+1$ and (ii) $\{p^+_1,p^+_2,\dots,p^+_r,p^+_{r+1}\}=\{SA[l_1],SA[l_2],\dots,SA[l_r],n+1\}$. \Cref{lcp_blocks_lemma} and its proof are phrased very similarly to Lemma 4 in~\cite{nishimoto_et_al:LIPIcs.ICALP.2021.101} to demonstrate its derivativeness and the similarity of the properties.

\begin{lemma}\label{lcp_blocks_lemma}
    (i) Let $x$ be the integer satisfying $p^+_x \leq i < p^+_{x+1}$ for some integer $i\in [1,n]$. Then $LCP[ISA[i]] = LCP[ISA[p^+_x]] - (i - p^+_x)$.
\end{lemma}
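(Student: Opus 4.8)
The plan is to relate $LCP[ISA[i]]$ to $LCP[ISA[p^+_x]]$ by tracking what happens to longest-common-prefix values as we step downward through a contiguous block of suffixes that all lie strictly between two consecutive ``run-top'' suffixes. The key structural fact I would use is this: the suffix positions $p^+_x, p^+_x+1, \dots, p^+_{x+1}-1$ are a conserved interval of $\phi^{-1}$ in exactly the sense that, writing these suffixes in suffix-array order, each one is obtained from the previous by a single $LF$ step from a position strictly inside a BWT run. Concretely, if $SA[k] = i$ for some $i$ with $p^+_x < i < p^+_{x+1}$, then position $k$ is not the bottom of a BWT run (that is the meaning of ``$i$ does not occur at the top of a run'' after one $\phi$ step — I would state the precise correspondence carefully, mirroring the setup around Lemma~4 of~\cite{nishimoto_et_al:LIPIcs.ICALP.2021.101}), so $L[k] = L[k-1]$, and hence the standard $LF$-mapping identity gives $LF[k-1]$ and $LF[k]$ as consecutive positions: $LF[k] = LF[k-1]+1$.

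From here the heart of the argument is the classical $PLCP$ recurrence $PLCP[i] \ge PLCP[i-1]-1$, sharpened to an equality under the run condition. I would argue: let $i' = i-1$, still in $[p^+_x, p^+_{x+1}-1)$, and let $k' = ISA[i']$, $k = ISA[i]$ — I need to be careful that $ISA$ is order-reversing on this block in the right way, so I would instead set things up with $\phi$ directly. Using $PLCP[i] = |lcp(T[i,n], T[\phi[i],n])|$, prepend the character $T[i-1]=T[\phi[i]-1]$: these two characters agree precisely because $\phi[i]-1 = \phi[i-1]$, which is exactly the statement that the interval is conserved for $\phi$. Therefore $T[i-1, n]$ and $T[\phi[i-1], n]$ share a common prefix of length $PLCP[i]+1$, giving $PLCP[i-1] \ge PLCP[i]+1$. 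For the reverse inequality I would show $\phi[i-1] = \phi[i]-1$ forces the lcp at position $i-1$ to be exactly one more than at position $i$: if it were longer, then $T[i,n]$ and $T[\phi[i],n]$ would agree in one more position than $PLCP[i]$ claims, a contradiction — here I use that $\phi[i]$ is genuinely the suffix-array predecessor of $i$, so the lcp with it is the true $PLCP$ value. Hence $PLCP[i-1] = PLCP[i] + 1$ for every $i$ in the open part of the block.

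Iterating this equality from $i$ down to $p^+_x$ (that is, $p^+_x - i$ applications, each increasing $PLCP$ by $1$ as the index decreases) yields $PLCP[i] = PLCP[p^+_x] - (i - p^+_x)$, which is the claim once we rewrite $PLCP[i] = LCP[ISA[i]]$ and $PLCP[p^+_x] = LCP[ISA[p^+_x]]$. I would double-check the one genuinely delicate point below.

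The main obstacle I anticipate is establishing cleanly that every suffix strictly between $p^+_x$ and $p^+_{x+1}$ has its predecessor in suffix-array order arising from a BWT position that is \emph{not} a run boundary — i.e. that the block $[p^+_x, p^+_{x+1}-1]$ of \emph{suffix indices} really is a conserved interval of $\phi$. The $p^+_j$'s are defined as the sorted suffix values $SA[l_j]$ sitting at run \emph{tops}, so I must translate ``$i$ is not one of the $p^+_j$ and lies between consecutive ones'' into ``$ISA[i]$ is not the top of a run'' and then into ``$ISA[\phi[i]] = ISA[i]-1$ is not the bottom of a run,'' and finally invoke $L[ISA[i]-1] = L[ISA[i]]$ together with the $LF$ step-by-one property and the fact that $\phi$ is the inverse-$LF$-style map on run-bottoms. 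This is essentially the $\phi$/$\phi^{-1}$ analogue of the $LF$ argument in~\cite{nishimoto_et_al:LIPIcs.ICALP.2021.101}, and I would lay out the correspondence between run-tops, run-bottoms, and the $\phi$/$\phi^{-1}$ conserved intervals explicitly before invoking the $PLCP$ recurrence, so that the ``$-1$ per step, exactly'' claim is fully justified rather than merely the easy inequality.
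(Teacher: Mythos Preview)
Your plan is essentially the paper's argument recast through $\phi$ and $PLCP$: the paper sets $s_t=ISA[p^+_x+t]$, notes that $s_t$ is not a run top (since $SA[s_t]\notin\{p^+_j\}$) so $L[s_t]=L[s_t-1]$, hence $LF$ sends the pair $(s_t,s_t-1)$ to the consecutive pair $(s_{t-1},s_{t-1}-1)$, and therefore $LCP[s_{t-1}]=LCP[s_t]+1$; chaining $y=i-p^+_x$ times gives the result. Your version walks the same chain but names the steps $PLCP[i-1]=PLCP[i]+1$ and $\phi[i-1]=\phi[i]-1$.

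One point to tighten: the implication ``$T[i-1]=T[\phi[i]-1]$ because $\phi[i]-1=\phi[i-1]$'' does not hold on its own; $\phi$-conservation only says that SA-predecessors are consecutive text positions and does not by itself force $L[ISA[i]]=L[ISA[i]-1]$. The correct logical order is the one you already sketch in your obstacle paragraph: $i\notin\{p^+_j\}$ means $ISA[i]$ is not a run top, hence $L[ISA[i]]=L[ISA[i]-1]$; this identity \emph{is} $T[i-1]=T[\phi[i]-1]$, and it simultaneously yields $\phi[i-1]=\phi[i]-1$ via the $LF$ step-by-one property (not the other way around). With that reordering your argument and the paper's coincide, and the separate $\ge$/$\le$ treatment of $PLCP[i-1]=PLCP[i]+1$ becomes unnecessary: once the leading characters agree, $|lcp(cA,cB)|=1+|lcp(A,B)|$ is an exact equality.
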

\begin{proof}
\Cref{lcp_blocks_lemma}(i) clearly holds for $i=u_{\delta'[x]}$. We show that Lemma 4(i) holds for $i\neq p^+_x$ (i.e., $i>p^+_x$). Let $s_t$ be the position in $SA$ with sa-value $p^+_x+t$ for an integer $t\in[1,y]$ (i.e., $SA[s_t]=p^+_x+t$) where $y = i - p^+_x$. Two adjacent positions $s_t - 1$ and $s_t$ are contained in an interval $[l_v, l_v + |L_v| -1]$ on $LCP$ which corresponds to the $v$-th run $L_v$ of $L$. This is because $s_t$ is not the starting position of a run, i.e., $(SA[s_t]=p^+_x+t)\notin \{p^+_1,p^+_2,\dots,p^+_r\}$. The LF function maps $s_t$ to $s_{t-1}$, where $s_0$ is the position with sa-value $p^+_x$. LF also maps $s_t - 1$ to $s_{t-1} - 1$ by Lemma 3(i) of~\cite{nishimoto_et_al:LIPIcs.ICALP.2021.101}. $LCP[s_{t-1}] = LCP[s_t] + 1$ due to $s_t$ and $s_t - 1$ being in the same interval on $L$, $L_v$. These relationships produce $y$ equalities $LCP[s_0] = LCP[s_1] + 1, LCP[s_1] = LCP[s_2]+1,\dots,LCP[s_{y-1}] = LCP[s_y] + 1$. The equalities lead to $LCP[s_0] = LCP[s_y] + y$, and therefore $LCP[s_y] = LCP[s_0] - y$. Which represents $LCP[ISA[i]] = LCP[ISA[p^+_x] - (i-p^+_x)$ by $ISA[i] = s_y, ISA[p^+_x]= s_0,$ and $y = (i - p^+_x)$.
\end{proof}

\begin{lemma}
    (i) Let $x$ be the integer satisfying $p^+_x \leq i < p^+_{x+1}$ for some integer $i\in[1,n]$. Then $PLCP[i] = PLCP[p^+_x] - (i - p^+_x)$.
\end{lemma}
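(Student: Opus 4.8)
The plan is to obtain this statement as an immediate corollary of \Cref{lcp_blocks_lemma} by applying the definitional identity $PLCP[j] = LCP[ISA[j]]$, which holds because $SA[ISA[j]] = j$ and $PLCP[SA[i]] = LCP[i]$. First I would fix an integer $i \in [1,n]$ and let $x$ be the unique index with $p^+_x \le i < p^+_{x+1}$; such an $x$ exists and is unique because the $p^+_j$ partition $[1,n]$ into the intervals $[p^+_j, p^+_{j+1}-1]$ by conditions (i) and (ii) defining the $p^+$ sequence. Then by \Cref{lcp_blocks_lemma}(i) we have $LCP[ISA[i]] = LCP[ISA[p^+_x]] - (i - p^+_x)$.

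Next I would rewrite both sides using $PLCP$. For the left side, $PLCP[i] = LCP[ISA[i]]$ directly from the definition of $PLCP$ (if $ISA[i] = k$ then $SA[k] = i$, so $PLCP[i] = PLCP[SA[k]] = LCP[k] = LCP[ISA[i]]$). For the right side, applying the same identity with $i$ replaced by $p^+_x$ gives $LCP[ISA[p^+_x]] = PLCP[p^+_x]$. Substituting these two equalities into the conclusion of \Cref{lcp_blocks_lemma}(i) yields $PLCP[i] = PLCP[p^+_x] - (i - p^+_x)$, which is exactly the claim.

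There is essentially no obstacle here: the real content is already in \Cref{lcp_blocks_lemma}, and this lemma is just the restatement of that result in the permuted ($PLCP$) coordinate system rather than the suffix-array-order ($LCP$) coordinate system. The one point worth stating carefully is that $ISA$ and $SA$ are genuine mutual inverses (noted in the background section), so the change of variables between $LCP[\cdot]$ and $PLCP[\cdot]$ is valid with no edge cases; in particular $p^+_x$ is itself a text position in $[1,n]$, so $PLCP[p^+_x]$ is well defined and equals $LCP[ISA[p^+_x]] = LCP[l_{v}]$ for the appropriate run index $v$ with $SA[l_v] = p^+_x$. I would keep the proof to two or three sentences matching the brevity of the statement.
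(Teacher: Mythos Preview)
Your proposal is correct and matches the paper's approach exactly: the paper's proof is a single line invoking \Cref{lcp_blocks_lemma} together with the identity $PLCP[j] = LCP[ISA[j]]$ for all $j\in[1,n]$, which is precisely the substitution you describe.
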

\begin{proof}
By \Cref{lcp_blocks_lemma} and $PLCP[j] = LCP[ISA[j]]$ for all $j\in [1,n]$~\cite{10.1007/978-3-642-02441-2_17}.    
\end{proof}

\subsubsection{Move Data Structure for \texorpdfstring{$\phi$}{\textit{φ}}}
$p^+_j$ remains as defined in the previous section. Let $\delta^+$ be a permutation of $[1,r]$ satisfying $\phi(p^+_{\delta^+[1]}) < \phi(p^+_{\delta^+[2]}) < \dots < \phi(p^+_{\delta^+[r]})$. $\phi$ has the following properties on RLBWT.
\begin{lemma}\label{philemma}
    The following three statements hold: (i) Let x be the integer satisfying $p^+_x \leq i < p^+_{x+1}$ for some integer $i\in[1,n]$. Then $\phi(i) = \phi(p^+_x) + (i-p^+_x)$; (ii) $\phi(p^+_{\delta^+[1]}) = 1$ and $\phi(p^+_{\delta^+[i]}) = \phi(p^+_{\delta^+[i-1]}) + d$ where $d = p^+_{\delta^+[i-1]+1} - p^+_{\delta^+[i-1]}$; (iii) $p^+_1 = 1$.
\end{lemma}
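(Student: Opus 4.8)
The plan is to mirror the structure of Lemma 4 in Nishimoto and Tabei, since the author has already set up $p^+_j$ and $\delta^+$ in exact analogy. The three statements should fall out of elementary properties of $\phi$ on the BWT, so I would not expect any deep obstacle — the work is in tracking indices carefully.

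For part (i), I would argue via $\phi^{-1}$ or directly. The key fact is that $\phi$ is ``conserved'' on a maximal range of suffixes none of whose images sit at the top of a BWT run (except possibly the first). Concretely, fix $i$ with $p^+_x \le i < p^+_{x+1}$ and set $y = i - p^+_x$. For each $t \in [0,y]$ let $s_t$ be the position in $SA$ with $SA[s_t] = p^+_x + t$. Since $p^+_x+t \notin \{p^+_1,\dots,p^+_r\}$ for $t \ge 1$, the position $s_t$ is not the start of a BWT run, hence $s_t - 1$ and $s_t$ lie in the same run, so $LF$ maps $s_t \mapsto s_{t-1}$ and $s_t - 1 \mapsto s_{t-1} - 1$ (Lemma~3(i) of~\cite{nishimoto_et_al:LIPIcs.ICALP.2021.101}). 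Unwinding, $\phi(SA[s_t]) = SA[s_t - 1]$, and the chain $SA[s_0-1], SA[s_1-1], \dots$ is itself an $LF$-orbit shifted by one, giving $SA[s_t - 1] = SA[s_0 - 1] + t$, i.e. $\phi(p^+_x + t) = \phi(p^+_x) + t$. Taking $t = y$ yields (i). (Alternatively one can simply invoke that $\phi$ and $\phi^{-1}$ are mutual inverses and transport the $\phi^{-1}$-conservation statement already used in OptBWTR, but the direct $LF$-orbit argument keeps the excerpt self-contained.)

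For part (ii), I would use (i) to reduce the global behavior of $\phi$ to its values at the $r$ points $p^+_1,\dots,p^+_r$. By (i), the image under $\phi$ of the input interval $[p^+_j, p^+_{j+1}-1]$ is exactly the contiguous interval $[\phi(p^+_j), \phi(p^+_j) + (p^+_{j+1}-p^+_j) - 1]$, of length $p^+_{j+1}-p^+_j$. Since $\phi$ is a permutation of $[1,n]$, these $r$ output intervals partition $[1,n]$; ordering them by left endpoint is precisely the ordering given by $\delta^+$. The interval with the smallest left endpoint must start at $1$, giving $\phi(p^+_{\delta^+[1]}) = 1$; and each successive interval starts immediately after the previous one ends, which is the recurrence $\phi(p^+_{\delta^+[i]}) = \phi(p^+_{\delta^+[i-1]}) + (p^+_{\delta^+[i-1]+1} - p^+_{\delta^+[i-1]})$. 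This is the standard ``disjoint interval sequence'' verification, so it is routine.

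For part (iii), $p^+_1 = 1$ holds because the smallest suffix in the lexicographic order, $SA[1]$, is always at the top of the first run of the BWT (position $1$ is $l_1$), so $SA[l_1]$ is among the $p^+$ values; and no suffix is lexicographically smaller than it could have a smaller starting-position requirement — more simply, $1 \in \{SA[l_1],\dots\}$ is false in general, so instead I would note directly that the suffix starting at position $1$ of $T$, namely $T$ itself, occupies some SA position, but the cleanest route is: $p^+_1$ is defined as the smallest suffix index occurring at the top of a run, and position $n$ of $T$ (the suffix $\$$) is $SA[1]$; hmm — rather, the correct observation is that the run containing $SA$-position $1$ starts at $l_1 = 1$, so $SA[1] = SA[l_1]$ is one of the $p^+$ values, but we want the \emph{smallest suffix index}, not the smallest SA position. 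I would instead argue $p^+_1 = 1$ by observing suffix $1$ of $T$ is $T[1,n]$; since $\$$ is unique and last, suffix $1$ is preceded by nothing, equivalently it corresponds to the BWT position whose predecessor is $\$ = L[1]$, i.e. $LF$ of that position is $1$, placing it at a run boundary. The expected main obstacle is exactly this part (iii): pinning down why $p^+_1 = 1$ requires one careful observation about where the suffix $T[1,n]$ sits relative to run structure, and I would want to state it cleanly rather than hand-wave; everything in (i) and (ii) is a direct transcription of the analogous $LF$/$\phi^{-1}$ arguments already present in the excerpt.
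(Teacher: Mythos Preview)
Your arguments for (i) and (ii) are essentially the paper's: the same $LF$-orbit chain for (i), and the same ``output intervals partition $[1,n]$, so the leftmost starts at $1$ and the rest stack consecutively'' reasoning for (ii). The paper additionally identifies \emph{which} $p^+_q$ has $\phi(p^+_q)=1$ concretely (via the $\$$-run), but your abstract version is equally valid.

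For (iii) you have the right instinct but the argument you wrote is garbled. The claim ``$\$ = L[1]$'' is false in general: $L[1] = T[SA[1]-1] = T[n-1]$, not $\$$. The clean statement you are looking for, and the one the paper uses, is this: let $j$ be the unique BWT position with $L[j]=\$$; then $SA[j]=1$ (since $L[j]=\$=T[n]$ forces $SA[j]-1\equiv n$, i.e.\ $SA[j]=1$). Because $\$$ occurs exactly once in $L$, this position is a run by itself, so $j=l_p$ for some $p$. Hence $1=SA[l_p]\in\{SA[l_1],\dots,SA[l_r]\}=\{p^+_1,\dots,p^+_r\}$, and since $p^+_1$ is by definition the minimum of that set, $p^+_1=1$. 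Your ``$LF$ of that position is $1$, placing it at a run boundary'' detour is unnecessary and does not by itself establish that $j$ is a run start; the uniqueness of $\$$ is what does the work.
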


\begin{proof}
(i) \Cref{philemma}(i) clearly holds for $i=p_x$. We show that \Cref{philemma}(i) holds for $i\neq p^+_x$ (i.e., $i>p^+_x$). Let $s_t$ be the position in $SA$ with sa-value $p^+_x+t$ for an integer $t\in[1,y]$ (i.e., $SA[s_t] = p^+_x+t$), where $y = i - p^+_x$. Two adjacent positions $s_t$ and $s_t - 1$ are contained in an interval $[l_v,l_v+|L_v| - 1]$ on SA (i.e., $s_t,s_t-1\in[l_v,l_v+|L_v|-1]$), which corresponds to the $v$-th run $L_v$ of $L$. This is because $s_t$ is not the starting position of a run, i.e. $(SA[s_t]=p^+_x+t)\notin \{p^+_1,p^+_2,\dots,p^+_r\}$. The LF function maps $s_t$ to $s_{t-1}$, where $s_0$ is the position with sa-value $v_x$. LF also maps $s_t-1$ to $s_{t-1}-1$, by Lemma 3(i) of \cite{nishimoto_et_al:LIPIcs.ICALP.2021.101}. The two mapping relationships established by LF produce $y$ equalities $\phi(SA[s_1]) = \phi(SA[s_0]) + 1, \phi(SA[s_2]) = \phi(SA[s_1]) +1,\dots,\phi(SA[s_y])=\phi(SA[s_{y-1}])+1$. The equalities lead to $\phi(SA[s_y]) = \phi(SA[s_0]) + y$, which represents $\phi(i) = \phi(p^+_x) + (i-p^+_x)$ by $SA[s_y] = i, SA[s_0] = p^+_x$, and $y = i - p^+_x$. 

(ii) Let $p$ be the integer satisfying $L_p=\$$. Then there exists an integer $q$ such that $p^+_q$ is the sa-value at position $l_p+1$ ($l_p+1 = l_{p+1}$) if $p \neq r$; otherwise if $p=r$, $p^+_q$ is the sa-value at position 1 and $q=1$. $\phi(p^+_q) = 1$, because $SA[l_p] = 1$ always holds. Hence $\phi(p^+_{\delta^+[1]}) = 1$ holds by $\delta^+[1] = q$.

Next, $\phi(p^+_{\delta^+[i]}) = \phi(p^+_{\delta^+[i-1]}) + d$ holds for any $i\in[2,r]$ because (a) $\phi$ maps the interval $[p^+_{\delta^+[i]},p^+_{\delta^+[i]} + d-1]$ into the interval $[\phi(p^+_{\delta^+[i]}), \phi(p^+_{\delta^+[i]}) + d - 1]$ by \Cref{philemma}(i) for any $i\in[1,r]$, (b) $\phi$ is a bijection from $[1,n]$ to $[1,n]$, and (c) $\phi(p^+_{\delta^+[1]}) < \phi(p^+_{\delta^+[2]}) < \dots < \phi(p^+_{\delta^+[r]})$ holds.

(iii) Recall that $p$ is an integer satisfying $L_p = \$$. Then there exists an integer $q'$ such that $p^+_{q'}$ is the sa-value at position $l_p$. Finally, recall $SA[l_p]=1$. Hence, $v_1 = v_{q'} = 1$ holds.
\end{proof}

We can compute $\phi$ by using a move data structure. A sequence $I_\phi$ consists of $r$ pairs $(p^+_1, \phi(p^+_1)), (p^+_2, \phi(p^+_2)),\dots,(p^+_r,\phi(p^+_r))$. $I_\phi$ satisfies the three conditions of a disjoint interval sequence by \Cref{philemma}, and $\phi$ is equal to the bijective function represented by $I_\phi$. 

\begin{lemma}\label{phifunclemma}
    (i) $I_\phi$ is a disjoint interval sequence. (ii) $\phi$ is equal to the bijective function represented by $I_\phi$.
\end{lemma}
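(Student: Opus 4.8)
The plan is to verify both claims directly from the definition of a disjoint interval sequence, with essentially all of the substance already supplied by \Cref{philemma}. For part (i), I would align the pair sequence $I_\phi = (p^+_1,\phi(p^+_1)),(p^+_2,\phi(p^+_2)),\dots,(p^+_r,\phi(p^+_r))$ with the template $(p_1,q_1),\dots,(p_k,q_k)$ by taking $k=r$, $p_i = p^+_i$, $q_i = \phi(p^+_i)$, and $p_{k+1} = p^+_{r+1} = n+1$, and I would propose $\delta^+$ as the permutation $\pi$ of $[1,k]$ required by the definition. The three conditions are then checked one at a time against \Cref{philemma}: condition (i), $p_1 = 1 < p_2 < \cdots < p_k \le n$, is $p^+_1 = 1$ (from \Cref{philemma}(iii)) together with the defining chain $p^+_1 < p^+_2 < \cdots < p^+_r \le n$; condition (ii), $q_{\pi[1]} = 1$, is $\phi(p^+_{\delta^+[1]}) = 1$ (from \Cref{philemma}(ii)); and condition (iii), $q_{\pi[i]} = q_{\pi[i-1]} + (p_{\pi[i-1]+1} - p_{\pi[i-1]})$, is exactly the recurrence $\phi(p^+_{\delta^+[i]}) = \phi(p^+_{\delta^+[i-1]}) + d$ with $d = p^+_{\delta^+[i-1]+1} - p^+_{\delta^+[i-1]}$ (also from \Cref{philemma}(ii)).

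For part (ii), recall that the bijective function represented by a disjoint interval sequence sends a position $j$ lying in the $x$-th input interval $[p_x, p_{x+1}-1]$ to $q_x + (j - p_x)$. Under the identifications above this value is $\phi(p^+_x) + (j - p^+_x)$, which by \Cref{philemma}(i) equals $\phi(j)$ for every $j$ with $p^+_x \le j < p^+_{x+1}$. Since $p^+_1 = 1$, $p^+_1 < \cdots < p^+_r$, and $p^+_{r+1} = n+1$, the input intervals $[p^+_x, p^+_{x+1}-1]$ for $x \in [1,r]$ partition $[1,n]$, so the represented function and $\phi$ agree at every point of $[1,n]$ and are therefore the same function; both are bijections of $[1,n]$, so no further argument is needed.

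I do not expect a genuine obstacle here: the lemma is a bookkeeping translation of \Cref{philemma} into the vocabulary of disjoint interval sequences, exactly parallel to how the corresponding statement for $\phi^{-1}$ (and for $LF$) is obtained in~\cite{nishimoto_et_al:LIPIcs.ICALP.2021.101}. The only point requiring care is the index alignment — in particular confirming that the permutation demanded in the definition is precisely $\delta^+$, and that the successor index in condition (iii) is $\delta^+[i-1]+1$ (the left endpoint of the input interval immediately following the $\delta^+[i-1]$-th one) rather than $\delta^+[i]$. Since $d$ in \Cref{philemma}(ii) is written with exactly that index, the match is immediate and the proof is short.
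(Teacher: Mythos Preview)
Your proposal is correct and follows essentially the same approach as the paper: for (i) you verify the three disjoint-interval-sequence conditions directly from \Cref{philemma}(iii) and (ii) with $\delta^+$ playing the role of the required permutation, and for (ii) you identify the represented bijection as $j\mapsto \phi(p^+_x)+(j-p^+_x)$ and invoke \Cref{philemma}(i), exactly as the paper does. Your extra remarks on index alignment are sound and add a bit of clarity over the paper's terse version.
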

\begin{proof}
(i) $I_\phi$ has the following three properties: (a) $p^+_1 = 1 < p^+_2 < \dots < p^+_r \leq n$ holds by \Cref{philemma}(iii) and the definition of the sequence $p^+_1,p^+_2,\dots,p^+_{r+1}$, (b) $\phi(p^+_{\delta^+[1]}) = 1$ by \Cref{philemma}(ii), and (c) $\phi(p^+_{\delta^+[i]}) = \phi(p^+_{\delta^+[i-1]}) + (p^+_{\delta^+[i-1]+1} - p^+_{\delta^+[i-1]})$. Therefore $I_\phi$ satisfies the three conditions of the disjoint interval sequence.

(ii) Let $f_\phi$ be the bijective function represented by $I_\phi$. Then $f_\phi(i) = \phi(p^+_x) + (i-p^+_x)$ where $x$ is the integer such that $p^+_x \leq i < p^+_{x+1}$ holds. On the other hand, $\phi(i) = \phi(p^+_x) + (i - p^+_x)$ holds by \Cref{philemma}(i). Therefore $f_\phi(i) = \phi(i)$ and $f_\phi$ and $\phi$ are the same function.
\end{proof}

Let $F(I_\phi)$ be the move data structure built on the balanced interval sequence $B(I_\phi)$ for $I_\phi$. By Lemma 6 of \cite{nishimoto_et_al:LIPIcs.ICALP.2021.101}, $F(I_\phi)$ requires $O(r)$ words of space. By the results of Section 3.2 of \cite{nishimoto_et_al:LIPIcs.ICALP.2021.101}, evaluation of a move query using a move data structure for a balanced disjoint interval sequence takes constant time. Finally, $\phi(i) = i'$ holds for a move query $Move(B(I_\phi),i,x) = (i',x')$ by \Cref{phifunclemma}. Therefore we have proved (i) of the following lemma.
\begin{lemma}\label{plcpabovelemmadatastructure}
    (i) There exists a move data structure $F(I_\phi)$ that computes $\phi(i)$ in $O(r)$ space and constant time given $x$, the index of the input interval of $I_\phi$ that contains $i$. (ii) This move data structure can be modified to also compute $PLCP[i]$ in $O(r)$ space and constant time given $i$ and $x$. Call the modified move data structure $F(I_{\phi,PLCP})$. 
\end{lemma}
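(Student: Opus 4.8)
Part (i) is already established, so only the $PLCP$ augmentation in (ii) remains. The plan is to attach a small lookup table to $F(I_\phi)$ recording one $PLCP$ value per input interval of the balanced sequence $B(I_\phi)$, and then recover $PLCP[i]$ by a single subtraction, using the fact that $PLCP$ decreases by exactly one along each $p^+$-block. The starting observation is the identity proved in the lemma immediately preceding the $\phi$ subsection: for the integer $x$ with $p^+_x \le i < p^+_{x+1}$, we have $PLCP[i] = PLCP[p^+_x] - (i - p^+_x)$. Since every input interval of $B(I_\phi)$ is a piece obtained by splitting some input interval $[p^+_x, p^+_{x+1}-1]$ of $I_\phi$, each such sub-interval lies entirely inside one $p^+$-block; subtracting the displayed identity for $i$ from the identity for the left endpoint $b$ of that sub-interval gives $PLCP[i] = PLCP[b] - (i-b)$ for every $i \ge b$ in the sub-interval. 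Thus $i \mapsto PLCP[i]$ is an arithmetic progression with common difference $-1$ on each input interval of $B(I_\phi)$.

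Given this, I would define $F(I_{\phi,PLCP})$ as $F(I_\phi)$ augmented by an array $D[1..k']$, where $k' \le 2r$ is the number of input intervals of $B(I_\phi)$ and $D[x] = PLCP[b_x]$ with $b_x$ the left endpoint of the $x$-th such interval. Each entry needs $O(\log n) = O(w)$ bits, so $D$ uses $O(r)$ words and the whole structure remains $O(r)$ space. To answer a query on $(i,x)$ where $i$ lies in the $x$-th input interval of $B(I_\phi)$: the move data structure already gives $O(1)$ access to $b_x$ (it must, since the move step itself computes $i - p_x$), so we return $D[x] - (i - b_x)$, which equals $PLCP[i]$ by the progression property. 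This is one table lookup and two subtractions, hence $O(1)$ time; and the input signature $(i,x)$ is exactly the one already used by the $\phi$ move query, so nothing extra is demanded of the caller. Finally, $D$ can be populated while $B(I_\phi)$ is being constructed, so there is no asymptotic overhead in building the structure.

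I do not anticipate a real obstacle; the one point requiring care is that the move data structure runs on the split sequence $B(I_\phi)$, not on $I_\phi$ itself, so the augmenting values must be keyed to sub-intervals of $B(I_\phi)$ rather than to the original $p^+$-blocks. Once one records that each sub-interval is nested inside a single $p^+$-block — which is precisely why the decrease-by-one property transfers verbatim — the remainder is bookkeeping, parallel to how $SA^+$ and $SA^+_{index}$ sample values at interval boundaries in OptBWTR.
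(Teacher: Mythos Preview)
Your proposal is correct and takes essentially the same approach as the paper: store one $PLCP$ sample per input interval of $B(I_\phi)$ and recover $PLCP[i]$ by a single subtraction using the decrease-by-one identity. If anything, you are more careful than the paper's own proof about the $I_\phi$ versus $B(I_\phi)$ distinction---the paper overloads the symbol $p^+_x$ for both the original run-top suffixes and the split-interval endpoints, while you explicitly justify why the arithmetic-progression property transfers from the $p^+$-blocks to their sub-intervals.
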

\begin{proof}
    Say that $B(I_\phi)$ has $k^+$ input intervals and the $i$-th input interval is $[p^+_i,p^+_{i+1}]$. Then we modify the move data structure $F(I_\phi)$ by adding an array $LCP^+$ of size $k^+$. $LCP^+[i]$ stores the value $LCP[ISA[p^+_x]]$ for each $x\in[1,k^+]$. ($PLCP[p^+_x] = LCP[ISA[p^+_x]]$.) $PLCP[i] = PLCP[p^+_x] - (i - p^+_x)$ by \Cref{lcp_blocks_lemma}. Therefore $PLCP[i]$ is computed in constant time by evaluating $LCP^+[x] - (i - p^+_x)$. Call this modified move data structure $F(I_{\phi,PLCP})$.
\end{proof}

A similar function that we may need to compute is $LCP[i+1]$ given $SA[i]$, i.e. given $SA[i] = y$, compute $|lcp(T[y,n], T[\phi^{-1}(y),n])| = PLCP[\phi^{-1}(y)] = LCP[i+1]$. Nishimoto and Tabei described $F(I_{SA})$, a move data structure computing $\phi^{-1}$. $F(I_{SA})$ can be modified to compute $PLCP[\phi^{-1}(y)]$ in constant time as well in a similar fashion to the modification of the $F(I_\phi)$ data structure. Call the disjoint interval sequence $F(I_{SA})$ is built on $B(I_{SA})$. Call the $i$-th input interval of $B(I_{SA})$ $[p^-_i,p^-_{i+1}-1]$, where $B(I_{SA})$ has $k^-$ input intervals and $p^-_{k^-+1} = n+1$. Note that by the construction of Nishimoto and Tabei, every suffix at the bottom of a BWT run is the start of an input interval, $\{SA[l_2-1],SA[l_3-1],SA[l_4-1],\dots,SA[l_r-1],SA[n]\}\subseteq \{p^-_1,p^-_2,\dots,p^-_k\}$. Then for any $i,j\in[p^-_x,p^-_{x+1}-1], \phi^{-1}(i) - \phi^{-1}(j) = i - j$. Therefore $\phi^{-1}(i) = \phi^{-1}(j) + (i-j)$ (see Lemma 4 in \cite{nishimoto_et_al:LIPIcs.ICALP.2021.101}). Below, we prove (ii) that for any $i,j\in[p^-_x,p^-_{x+1}-1], PLCP[\phi^{-1}(i)] - PLCP[\phi^{-1}(j)] = j - i$, therefore $PLCP[\phi^{-1}(i)] = PLCP[\phi^{-1}(j)] + j - i$.

\begin{lemma}\label{PLCPBelowLemma}
    Let $x$ be the integer satisfying $p^-_x \leq i < p^-_{x+1}$ for some $i\in[1,n]$. Then (i) $PLCP[\phi^{-1}(i)] = PLCP[\phi^{-1}(p^-_x)] - (i - p^-_x)$. Therefore, (ii) for any $i,j\in[p^-_x,p^-_{x+1}-1]$, $PLCP[\phi^{-1}(i)] = PLCP[\phi^{-1}(p^-_x)] - (i - p^-_x)$, $PLCP[\phi^{-1}(j)] = PLCP[\phi^{-1}(p^-_x)] - (j - p^-_x)$, and $PLCP[\phi^{-1}(i)] - PLCP[\phi^{-1}(j)] = j - i$.
\end{lemma}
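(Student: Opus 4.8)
The plan is to deduce (ii) from (i) and to prove (i) by the same telescoping argument used for \Cref{lcp_blocks_lemma}, but carried out \emph{downward} through the suffix array (i.e.\ via $\phi^{-1}$ and the input intervals of $B(I_{SA})$) rather than via $LF$ and the $p^+$ blocks. Part (ii) is immediate: since $i$ and $j$ lie in the same interval $[p^-_x,p^-_{x+1}-1]$, applying (i) to each and subtracting gives $PLCP[\phi^{-1}(i)]-PLCP[\phi^{-1}(j)] = (j-p^-_x)-(i-p^-_x) = j-i$, which also rearranges to $PLCP[\phi^{-1}(i)] = PLCP[\phi^{-1}(j)] + j - i$. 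So only (i) needs work, and only the case $i > p^-_x$ (the case $i = p^-_x$ is trivial).

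First I would rewrite the target quantity in terms of the ordinary $LCP$ array. Using $\phi^{-1}(i) = SA[ISA[i]+1]$ together with $PLCP[k] = LCP[ISA[k]]$~\cite{10.1007/978-3-642-02441-2_17}, one gets $PLCP[\phi^{-1}(i)] = LCP[ISA[i]+1]$ (and likewise with $i$ replaced by $p^-_x$), so the claim becomes $LCP[ISA[i]+1] = LCP[ISA[p^-_x]+1] - (i-p^-_x)$. Now set $y = i-p^-_x \ge 1$ and, for $t\in[0,y]$, let $s_t = ISA[p^-_x+t]$, so that $SA[s_t] = p^-_x+t$, $s_0 = ISA[p^-_x]$, and $s_y = ISA[i]$. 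The structural fact I would use is that $[p^-_x,p^-_{x+1}-1]$ is a conserved interval of $\phi^{-1}$, hence --- by the characterization recalled above (every suffix at the bottom of a BWT run starts an input interval of $B(I_{SA})$) --- it contains no suffix occurring at the bottom of a BWT run except possibly its first element $p^-_x$; therefore for every $t\in[1,y]$ the suffix $p^-_x+t$ is not at the bottom of a run, i.e.\ $s_t$ is not the last position of its run and $L[s_t] = L[s_t+1]$. It follows that $LF$ maps $s_t \mapsto s_{t-1}$ (the sa-value $p^-_x+t \ge 2$, so $LF$ decrements it) and, by Lemma 3(i) of~\cite{nishimoto_et_al:LIPIcs.ICALP.2021.101} applied to the adjacent same-run positions $s_t$ and $s_t+1$, also $s_t+1 \mapsto s_{t-1}+1$; consequently $SA[s_{t-1}] = p^-_x+t-1$ and $SA[s_{t-1}+1] = SA[s_t+1]-1$. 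The common symbol $c = L[s_t] = L[s_t+1]$ is precisely the character preceding both suffixes $T[SA[s_t],n]$ and $T[SA[s_t+1],n]$ in $T$, so prepending it --- which is exactly what the two $LF$ steps do --- lengthens their longest common prefix by exactly one, giving $LCP[s_{t-1}+1] = LCP[s_t+1] + 1$. Telescoping over $t = 1,\dots,y$ yields $LCP[s_0+1] = LCP[s_y+1] + y$, i.e.\ $LCP[s_y+1] = LCP[s_0+1] - y$, which is (i) after substituting back $s_0 = ISA[p^-_x]$, $s_y = ISA[i]$, $y = i-p^-_x$.

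I expect the main obstacle to be purely in the bookkeeping: the argument tracks the shifted positions $s_t+1$ rather than $s_t$, so I must make sure each $s_t+1$ is a genuine index in $[1,n]$ (for $t\ge1$ because $p^-_x+t$ is not at the bottom of a run, and for $t=0$ because $s_0+1 = LF[s_1+1]$ lies in the image of $LF$; this also rules out $p^-_x = SA[n]$ when $y\ge1$, so the rewriting $PLCP[\phi^{-1}(p^-_x)] = LCP[ISA[p^-_x]+1]$ is legitimate), that the ``not at the bottom of a run'' property is invoked only for the interior suffixes $p^-_x+1,\dots,i$ and not for $p^-_x$ itself, and that $SA[s_t+1]\ge 2$ so that the ``prepend $c$'' step is meaningful (which holds because $c=\$$ would force a length-one run, contradicting $L[s_t]=L[s_t+1]$). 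Modulo these checks the proof is the mirror image of the telescoping in \Cref{lcp_blocks_lemma}, with $LF$ acting on the pair $(s_t,s_t+1)$ in place of $(s_t-1,s_t)$.
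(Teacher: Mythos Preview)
Your proof is correct and follows essentially the same route as the paper: define $s_t=ISA[p^-_x+t]$, use that for $t\ge 1$ the position $s_t$ is not the end of a BWT run so $L[s_t]=L[s_t+1]$, apply $LF$ to the adjacent pair $(s_t,s_t+1)$ via Lemma~3(i) of~\cite{nishimoto_et_al:LIPIcs.ICALP.2021.101} to get $LCP[s_{t-1}+1]=LCP[s_t+1]+1$, and telescope. If anything, your bookkeeping of the boundary cases (validity of $s_t+1$, the $c=\$$ corner case) is more explicit than the paper's own argument.
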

\begin{proof}
    \Cref{PLCPBelowLemma}(i) clearly holds for $i=p^-_x$. We show that \Cref{PLCPBelowLemma}(i) holds for $p^-_x < i < p^-_{x+1}$ (i.e. $i\neq p^-_x$). Let $s_t$ be the position in the SA with sa-value $p^-_x+t$ for an integer $t\in[1,y]$ where $y=i-p^-_x$. Two adjacent positions $s_t$ and $s_t+1$ are contained in an interval $[l_v, l_{v+1}-1]$ corresponding to the $v$-th run in the BWT ($L_v$). This is because $s_t$ is not the ending position of a run, $SA[s_t]\notin \{p^-_1,p^-_2,\dots,p^-_{k^-}\}$. The LF function maps $s_t$ to $s_{t-1}$, where $s_0$ is the position in the SA with value $p^-_x$. LF also maps $s_t+1$ to $s_{t-1}+1$ by Lemma 3(i) of \cite{nishimoto_et_al:LIPIcs.ICALP.2021.101}. $PLCP[\phi^{-1}(p^-_x+t-1)] = LCP[s_{t-1}+1] = LCP[s_t+1] + 1 = PLCP[\phi^{-1}(p^-_x+t)] + 1$ since $s_t$ and $s_t+1$ are in the same interval in the BWT, $L_v$. These relationships produce $y$ equalities $PLCP[\phi^{-1}(p^-_x)] = PLCP[\phi^{-1}(p^-_x+1)] + 1$,$PLCP[\phi^{-1}(p^-_x+1)] = PLCP[\phi^{-1}(p^-_x+2)] + 1$,\dots,$PLCP[\phi^{-1}(p^-_x+y-1)] = PLCP[\phi^{-1}(p^-_x+y)] + 1$. This leads to $PLCP[\phi^{-1}(p^-_x)] = PLCP[\phi^{-1}(p^-_x+y)] + y$. Which leads to $PLCP[\phi^{-1}(i)] = PLCP[\phi^{-1}(p^-_x)] - (i - p^-_x)$ by $y = i-p^-_x$ and $p^-_x+y = i$. 
\end{proof}

Therefore, the move data structure that computes $\phi^{-1}(i)$, $F(I_{SA})$, can be modified to compute $PLCP[\phi^{-1}(i)]$ as well.

\begin{lemma}\label{plcpbelowmovedatastructure}
    $F(I_{SA})$ can be modified to compute $PLCP[\phi^{-1}(i)]$ as well as $\phi^{-1}(i)$ in constant time and $O(r)$ space given $x$, the index of the input interval of $B(I_{SA})$ that contains $i$. Call the modified move data structure $F(I_{\phi^{-1},PLCP})$. 
\end{lemma}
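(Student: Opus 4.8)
The plan is to mirror, almost verbatim, the construction used in the proof of \Cref{plcpabovelemmadatastructure}(ii): I would augment $F(I_{SA})$ with one extra machine word per input interval of the balanced interval sequence $B(I_{SA})$, holding the value of $PLCP\circ\phi^{-1}$ at the left endpoint of that interval, and then recover $PLCP[\phi^{-1}(i)]$ for arbitrary $i$ by a single subtraction justified by \Cref{PLCPBelowLemma}(i). Concretely, recalling that $B(I_{SA})$ has $k^- = O(r)$ input intervals with the $x$-th being $[p^-_x, p^-_{x+1}-1]$, I would introduce an array $PLCP^-$ of length $k^-$ with $PLCP^-[x] = PLCP[\phi^{-1}(p^-_x)]$. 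Since $k^- = O(r)$ and each entry is a single word, this leaves the total space at $O(r)$ words, and it does not touch the move query itself, so $\phi^{-1}(i)$ is still computed in constant time.

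For a query with argument $(i,x)$, where (exactly as the move query already requires) $x$ is the index of the input interval of $B(I_{SA})$ containing $i$, the modified structure returns $\phi^{-1}(i)$ by the ordinary move query and additionally returns $PLCP^-[x] - (i - p^-_x)$. The left endpoint $p^-_x$ is already stored inside $F(I_{SA})$, so this is an $O(1)$-time arithmetic operation. Correctness is immediate: \Cref{PLCPBelowLemma}(i) states precisely that $PLCP[\phi^{-1}(i)] = PLCP[\phi^{-1}(p^-_x)] - (i - p^-_x)$, and $PLCP[\phi^{-1}(p^-_x)] = PLCP^-[x]$ by definition of the added array. I would then name the augmented structure $F(I_{\phi^{-1},PLCP})$ and read off the claimed bounds.

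There is essentially no hard step here; the only things worth remarking on are (a) that \Cref{PLCPBelowLemma} is already phrased in terms of the \emph{balanced}-interval boundaries $p^-_x$, so no separate derivation is needed for the refinement of $I_{SA}$ into $B(I_{SA})$ — the additivity $PLCP[\phi^{-1}(i)] - PLCP[\phi^{-1}(j)] = j - i$ within an input interval of $I_{SA}$ trivially survives splitting that interval into subintervals — and (b) that $PLCP^-$ can be filled in at construction time from $\phi^{-1}$ together with $PLCP$ (equivalently $LCP$ and $ISA$) evaluated at just the $O(r)$ positions $p^-_1,\dots,p^-_{k^-}$, adding only $O(r)$ to the preprocessing cost. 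If a subtlety arises at all, it is only in making sure the description of which endpoint/index is passed to the query is consistent with the existing move-query interface of $F(I_{SA})$; this is handled by insisting, as above, that the caller supplies the same $x$ the move query already needs.
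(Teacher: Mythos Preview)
Your proposal is correct and matches the paper's own proof essentially verbatim: the paper also augments $F(I_{SA})$ with a length-$k^-$ array (called $LCP^-$ rather than your $PLCP^-$) storing $PLCP[\phi^{-1}(p^-_x)]$ and computes $PLCP[\phi^{-1}(i)]$ as $LCP^-[x]-(i-p^-_x)$ by \Cref{PLCPBelowLemma}(i). Your extra remarks on the balanced-interval refinement and preprocessing cost are fine but go slightly beyond what the paper spells out.
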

\begin{proof}
    We modify the $F(I_{SA})$ move data structure by $LCP^-$, an array of size $k^-$ where the $x$-th element stores the value $PLCP[\phi^{-1}(p^-_x)]$. Then, $PLCP[\phi^{-1}(i)]$ can be computed in constant time by evaluating $LCP^-[x] - (i - p^-_x)$ by $LCP^-[x] = PLCP[p^-_x]$ and \Cref{PLCPBelowLemma}(i). We call this modified move data structure $F(I_{\phi^{-1},PLCP})$.
\end{proof}

\subsubsection{OptBWTRL}
We slightly modify OptBWTR by adding a move data structure that computes $\phi$ and $PLCP$ and arrays that allow jumping to the closest input intervals corresponding to adjacent runs in the BWT in constant time. We call it \textit{OptBWTRL}, \textit{L} for LCP and $\mathcal{L}$ long LEMs. In addition to the structures of OptBWTR, OptBWTRL contains $F(I_{\phi,PLCP})$, $ND$, $PD$, $SA^-$, $SA^+_{\phi}$, $SA^-_{index}$, and $SA^-_{\phi}$. Furthermore, the $F(I_{SA})$ move data structure of OptBWTR is replaced by the $F(I_{\phi^{-1},PLCP})$ move data structure described in \Cref{plcpbelowmovedatastructure}. Recall that $B(I_{LF})$ is the disjoint interval sequence the move data structure $F(I_{LF})$ is built on. Let $B(I_{LF})$ contain $k$ input intervals where the $i$-th input interval is $[p_i,p_{i+1}-1]$, and $p_{k+1} = n+1$. Further recall that every input interval is contained in a run in the BWT, i.e. for all $i\in[1,k]$, $\forall j,j'\in [p_i,p_{i+1}-1], L[j] = L[j']$. Then, $ND$ and $PD$ are arrays of length $k$ where $ND$ contains the index of the next input interval with a different character in the BWT and $PD$ contains the index of the previous input interval with a different character in the BWT. Formally, for all $i\in[1,k]$, $ND[i] = \min\{j>i | L[p_i] \neq L[p_j]\}$, and $PD[i] = \max\{j<i | L[p_i]\neq L[p_j]\}$. If no such $j$ exists, $ND[i] = k + 1$ and $PD[i] = -1$. $ND$ and $PD$ can be constructed in $O(k)$ (and therefore, $O(r)$) time given $L_{first}$. $SA^-$ are samples of the $SA$ at the ends of input intervals of $B(I_{LF})$. $SA^+_{\phi}$ are the indices of the input intervals of the top of $B(I_{LF})$ input interval suffix array samples in $F(I_{\phi,PLCP})$. $SA^-_{index}$ and $SA^-_{\phi}$ are the indices of the input intervals of the bottom of $B(I_{LF})$ input interval suffix array samples in $F(I_{\phi^{-1},PLCP})$ and $F(I_{\phi,PLCP})$ respectively. Below, let $[p^+_i, p^+_{i+1}-1]$ and $[p^-_i, p^-_{i+1}-1]$ be the $i$-th input intervals of $F(I_{\phi,PLCP})$ and $F(I_{\phi^{-1},PLCP})$ respectively. Then, OptBWTRL differs from OptBWTR in the following ways.
\begin{itemize}
    \item Replaced $F(I_{SA})$ with $F(I_{\phi^{-1},PLCP})$ from \Cref{plcpbelowmovedatastructure}.
    \item Added $F(I_{\phi,PLCP})$ from \Cref{plcpabovelemmadatastructure}.
    \item Added $ND$ and $PD$. $ND[i] = \min_{j>i}\{L_{first}[j]\neq L_{first}[i]$ or $j = n+1\}$. $PD[i] = \max_{j>i}\{L_{first}[j]\neq L_{first}[i]$ or $j = -1\}$.
    \item Added $SA^-$. $SA^-[i] = SA[l_{i+1}-1]$.
    \item Added $SA^+_{\phi}$. $SA^+_\phi[i] = j$ s.t. $p^+_j\leq SA^+[i] < p^+_{j+1}$.
    \item Added $SA^-_{index}$. $SA^-_{index}[i] = j$ s.t. $p^-_j \leq SA^-[i] < p^-_{j+1}$.
    \item Added $SA^-_{\phi}$. $SA^-_{\phi}[i] = j$ s.t. $p^+_j\leq SA^-[i] < p^+_{j+1}$.
\end{itemize}

\subsection{Computing Long LEMs}\label{longsect}
Here, we describe an algorithm for outputting all the long LEMs of a pattern $P$ with respect to a text $T$ in $O(m+occ)$ expected time using an index of size $O(r)$ words given the matching statistics of $P$ with respect to $T$ and an OptBWTRL of $T$. $m$ is the length of $P$ and $occ$ is the number of long LEMs $P$ has with $T$. Furthermore, the matching statistics are slightly augmented to contain the input intervals it's corresponding data are contained in. In particular, the input interval of $F(I_{LF})$ that $MS.row$ is contained in is stored as $MS.i$, the input interval of $F(I_{\phi,PLCP})$ that $MS.suff$ is contained in is stored as $MS.w$, and the input interval of $F(I_{\phi^{-1},PLCP})$ that $MS.suff$ is contained in is stored as $MS.x$. Note that the long LEM query algorithm we present here does not necessarily result in an $O(m+occ)$ expected time algorithm for outputting all long LEMs of $P$ with respect to $T$ given a OptBWTRL of $T$ because an algorithm for computing the matching statistics of $P$ with respect to $T$ in $O(m)$ time and $O(r)$ space is not known. 

We define the \textit{balanced sa\textsubscript{lcp}-interval} of a string $P$ as a 13-tuple $(b,d,e,SA[b],SA[d],SA[e],\allowbreak i,j,k,v,w,x,y)$ where $[b,e]$ is the sa-interval of $P$, $d\in[b,e]$, $i,j,$ and $k$ are the indexes of the input intervals of $B(I_{LF})$ that contain $b,d,$ and $e$ respectively, $v$ and $w$ are the indexes of the input intervals of $B(I_{\phi,PLCP})$ containing $SA[b]$ and $SA[d]$ respectively, and $x$ and $y$ are the indexes of the input intervals of $B(I_{\phi^{-1},PLCP})$ of $SA[d]$ and $SA[e]$ respectively. The balanced sa\textsubscript{lcp}-interval keeps track of three positions in the sa-interval. The top ($b$), bottom ($e$), and the middle ($d$). $d$ is any position in the interval, it may be equivalent to the top or the bottom. Each position also maintains its corresponding suffix array value and index of the input interval of the position in $F(I_{LF})$ ($i,j,$ and $k$ for top, middle, and bottom respectively). Finally, the top maintains the index of the input interval of its sa-value in $F(I_{\phi,PLCP})$ ($v$), the bottom maintains the index of the input interval of its sa-value in $F(I_{\phi^{-1},PLCP})$ ($y$), and the middle maintains the index of the input interval of its sa-value in both $F(I_{\phi,PLCP})$ and $F(I_{\phi^{-1},PLCP})$ ($w$ and $x$ respectively). The balanced sa\textsubscript{lcp}-interval of a string $P$ with no occurrences in $T$ is undefined. 

The high level idea of the long LEM finding algorithm is to compute the balanced sa\textsubscript{lcp}-interval of adjacent substrings of length $\mathcal{L}$ of the pattern while outputting long LEMs along the way. I.E. given the balanced sa\textsubscript{lcp}-interval of $P[f+1, f+\mathcal{L}]$, compute the sa\textsubscript{lcp}-interval of $P[f,f+\mathcal{L}-1]$ and output all long LEMs of the form $P[f+1,g] = T[f',g']$. We call this problem \textit{long sa\textsubscript{lcp}-interval advancement}. Given an algorithm for long sa\textsubscript{lcp}-interval advancement in $O(t_\mathcal{L})$ time, a straightforward long LEM computation algorithm is iterating from $f=m\to 1$, repeatedly advancing the sa\textsubscript{lcp}-interval and outputting all long LEMs in $O(mt_\mathcal{L})$ time. In \Cref{sa-lcp-extension-sect}, we outline an algorithm for balanced sa\textsubscript{lcp}-interval extension and in \Cref{long-sa-lcp-advancement-sect}, we outline an algorithm for long sa\textsubscript{lcp}-interval advancement. These algorithms result in an $O(m+occ)$ expected time algorithm for long LEM computation.

\subsubsection{Balanced \texorpdfstring{sa\textsubscript{lcp}}{sa\_\{lcp\}}-interval Extension}\label{sa-lcp-extension-sect}
Here, we provide algorithms for obtaining the balanced sa\textsubscript{lcp}-interval of $cP$ given the balanced sa\textsubscript{lcp}-interval of $P$ and an OptBWTRL of $T$. The first algorithm runs in $O(\log\log_w\sigma)$ time by making use of the rank-select structure on $L_{first}$. The second runs in time linear to the number of runs in the balanced sa\textsubscript{lcp}-interval of $P$, $r_P$, by iterating through them. Call the balanced sa\textsubscript{lcp}-interval of $P$ $(b,d,e,SA[b],SA[d],SA[e],i,j,k,v,w,x,y)$ and the balanced sa\textsubscript{lcp}-interval of $cP$ $(b',d',e',SA[b'], SA[d'], SA[e'], i', j', k', v',w',x',y')$. Recall that $p_j$ is the starting index of the $j$-th input interval of $F(I_{LF})$, $p^+_j$ is the starting index of the $j$-th input interval of $F(I_{\phi,PLCP})$, and $p^-_j$ is the starting index of the $j$-the input interval of $F(I_{\phi^{-1},PLCP})$.

We first discuss the computation of the top values, $b',SA[b'],i',$ and $v'$. If $L[b] = c$, then $b' = LF[b]$ and $i'$ can be computed with $F(I_{LF})$ in constant time using $(b,i)$. $SA[b'] = SA[b] - 1$, and $v'=v$ if $SA[b]\neq p^+_v$, otherwise $v' = v - 1$. If $L[b]\neq c$, $b' = LF[\hat{b}]$, where $\hat{b}$ is the first location in $[b,e]$ such that $L[\hat{b}] = c$. If $\hat{i}$ is the index of first input interval $i \leq \hat{i} \leq k$ such that $L_{first}[\hat{i}] = c$, then $\hat{b} = p_{\hat{i}}$, where $p_a$ is the starting position of the $a$-th input interval of $F(I_{LF})$. $\hat{i}$ can be computed in $O(\log\log_w\sigma)$ time using $R(L_{first})$ or $O(r_P)$ time by iterating through the runs of balanced sa\textsubscript{lcp}-interval of $P$ using the $ND$ array. Then, $i'$ and $b' = LF[\hat{b}]$ can be computed with $F(I_{LF})$ in constant time using $(\hat{b}, \hat{i})$. $SA[b'] = SA^+[\hat{i}]-1$, and $v' =SA^+_\phi[\hat{i}]-1$. 

The bottom values $e', SA[e'], k'$, and $y'$ can be computed in a similar fashion. If $L[e] = c$, then $e' = LF[e]$ and $k'$ can be computed with $F(I_{LF})$ in constant time using $(e,k)$. $SA[e'] = SA[e] - 1$, and $y'=y$ if $SA[e]\neq p^-_k$, otherwise $y'=y-1$. If $L[e]\neq c$, then $e'=LF[\hat{e}]$, where $\hat{e}$ is the last location in $[b,e]$ such that $L[\hat{e}] = c$. If $\hat{k}$ is the index of the last input input interval $i \leq \hat{k} \leq k$ such that $L_{first}[\hat{k}] = c$, then $\hat{e} = p_{\hat{k}+1}-1$. $\hat{k}$ can be computed in $O(\log\log_w\sigma)$ time using $R(L_{first})$ or $O(r_p)$ time by iterating through the runs of the balanced sa\textsubscript{lcp}-interval of $P$ using the $PD$ array. Then, $k'$ and $e' = LF[\hat{e}]$ can be computed with $F(I_{LF})$ in constant time using $(\hat{e}, \hat{k})$. Finally, $SA[e'] = SA^-[\hat{k}]-1$, and $y'=SA^+_{index}[\hat{k}]-1$.

Lastly, the middle values $d', SA[d'], j', w'$ and $x'$ need to be computed. If $L[d] = c$, then $d'=LF[d]$ and $j'$ can be computed in constant time with $F(I_{LF})$ using $(d,j)$. $SA[d'] = SA[d-1]$. $w'=w$ if $SA[d]\neq p^+_w$, otherwise $w'=w-1$. Finally, $x'=x$ if $SA[d]\neq p^-_x$, otherwise $x'=x-1$. If $L[d]\neq c$ and $cP$ occurs in $T$, then there is a preceding or succeeding input interval of $B(I_{LF})$ that intersects with the balanced sa\textsubscript{lcp}-interval of $P$ and has value $c$ in the BWT. Suppose there is a preceding interval, $\hat{j}$. Then the middle values can be updated similar to the bottom values. Let $\hat{d} = p_{\hat{j}+1}-1$, then $j'$ and $d' = LF[\hat{d}]$ are computed in constant time with $F(I_{LF})$, $SA[d'] = SA^-[\hat{j}]-1$, $x' = SA^-_{index}[\hat{j}]-1$, and $w' = SA^-_{\phi}[\hat{j}]$ if $SA[\hat{d}] \neq p^+_{SA^-_{\phi}[\hat{j}]}$, otherwise $w' = SA^-_{\phi}[\hat{j}] - 1$. If there is no preceding interval, then set $\hat{j}$ to the index of the succeeding interval. Then the middle values can be updated similar to the top values. Let $\hat{d}=p_{\hat{j}}$, then $j'$ and $d'=LF[\hat{d}]$ are computed in constant time with $F(I_{LF})$, $SA[d'] = SA^+[\hat{j}]$, $w' = SA^+_{\phi}[\hat{j}]-1$, and $x' = SA^+_{index}[\hat{j}]$ if $SA[\hat{d}] \neq p^-_{SA^+_{index}[\hat{j}]}$, otherwise $x' = SA^+_{index}[\hat{j}] - 1$. The index, $\hat{j}$, of the preceding or succeeding interval in the sa\textsubscript{lcp}-interval of $P$ with value $c$ in the BWT can be found in $O(\log\log_w\sigma)$ time with $R(L_{first})$ or $O(r_P)$ time by iterating through the runs in the BWT with $PD$ and $ND$. Therefore, the balanced sa\textsubscript{lcp}-interval of $cP$ can be computed in $O(\log\log_w\sigma)$ time or $O(r_P)$ time given the balanced sa\textsubscript{lcp}-interval of $P$.

\subsubsection{Long \texorpdfstring{sa\textsubscript{lcp}}{sa\_\{lcp\}}-interval Advancement}\label{long-sa-lcp-advancement-sect}
Let $occ_{start,f+1}$ be the number of long LEMs of the  form $P[f+1,g] = T[f',g']$ and $occ_{end,f+\mathcal{L}-1}$. be the number of long LEMs of the form $P[h, f+\mathcal{L}-1] = T[f'',g'']$. Here, we describe an algorithm that computes the balanced sa\textsubscript{lcp}-interval of $P[f,f+\mathcal{L}-1]$ and outputs all $occ_{start,f+1}$ long LEMs of the form $P[f+1,g] = T[f',g']$ in $O(occ_{start,f+1}+occ_{end,f+\mathcal{L}-1})$ expected time given the balanced sa\textsubscript{lcp}-interval of $P[f+1,f+\mathcal{L}]$, an OptBWTRL of $T$, and a dynamic dictionary of the suffixes of $T$ present in the balanced sa\textsubscript{lcp}-interval of $P[f+1,f+\mathcal{L}]$.

We begin with the description of the dynamic dictionary, $dict_{occ}$. There are numerous dynamic dictionary data structures that support expected constant time insertion, deletion, and queries~\cite{bender2022optimal,rrr10.1007/3-540-45061-0_30,bendersimpledoi:10.1137/1.9781611977936.33,demaine10.1007/11682462_34}. Therefore, we maintain a dynamic dictionary of the suffixes in the balanced sa\textsubscript{lcp}-interval. More precisely, if the balanced sa\textsubscript{lcp}-interval of $P[f+1,f+\mathcal{L}]$ is $(b,d,e,SA[b],SA[d],SA[e],i,j,k,v,w,x,y)$, then the dynamic dictionary provided as input to the long sa\textsubscript{lcp}-interval advancement algorithm has $e-b+1$ elements. $\forall a\in[b,e], SA[b] - (f + 1)$ is contained in the dictionary and has the value $(f+1) + |lcp(T[SA[a],n],P[f+1,m])|-1= f + |lcp(T[SA[a],n],P[f+1,m])|$ associated with it. I.E. the value associated with each suffix $SA[a]$ of the text contained in the dictionary is the ending position (in the pattern) of the longest match between suffix $SA]a$ of the text and suffix $f+1$ of the pattern.

Here, we describe the procedure for outputting all $occ_{start,f+1}$ long LEMs of the form $P[f+1,g]=T[f',g']$ in $O(occ_{start,f+1})$ expected time. The high level idea is to iterate through the input intervals of $B(I_{LF})$, skipping intervals corresponding to a run of $P[f]$ in constant time per run using $ND$. We outline two functions: $outputMatchesDown(s, \iota, z)$ and $outputMatchesUp(s,\iota,z)$. For both functions, $s$ represents a suffix of $T$ and $\iota$ is the index of the input interval that contains it in $F(I_{\phi^{-1},PLCP})$ and $F(I_{\phi,PLCP})$ in $outputMatchesDown$ and $outputMatchesUp$ respectively. $z$ represents the number of matches to output (directly above $s$ in $SA$ for $outputMatchesUp$ and directly below $s$ in $SA$ for $outputMatchesDown$) including $s$. $outputMatchesUp(s,\iota,1)$ outputs a match $P[f+1,g]=T[s,s+g-f]$, where $g=dict_{occ}[s-(f+1)]$, and removes the key-value pair $(s-(f+1), g)$ from $dict_{occ}$. $outputMatchesUp(s,\iota,z)$ for $z>1$ similarly outputs a match $P[f+1,g]=T[s,s+g-f]$ where $g=dict_{occ}[s-(f+1)]$, then removes the key-value pair $(s-(f+1),g)$ from $dict_{occ}$. Then, it recurses on $outputMatchesMathesUp(s',\iota',z-1)$, where $\iota'$ and $s'=\phi(s)$ are computed in constant time using $F(I_{\phi,PLCP})$. $outputMatchesDown(s,\iota,z)$ operates in the same way as $outputMatchesDown$ except it computes $\phi^{-1}$ instead of $\phi$ (using $F(I_{\phi^{-1},PLCP})$ instead of $F(I_{\phi,PLCP})$). It is simple to see that $outputMatchesUp(s,\iota,z)$ and $outputMatchesDown(s,\iota,z)$ operate in $O(z)$ expected time and output $z$ matches each. Now we utilize $outputMatchesUp$ and $outputMatchesDown$ to output the $occ_{start,f+1}$ long LEMs of the form $P[f+1,g]=T[f',g']$. If the sa\textsubscript{lcp}-interval of $P[f+1,f+\mathcal{L}]$ is fully contained in one input interval of $F(I_{LF})$, then $i=k$. If $L_{first}[i]=P[f]$, then there are no matches to output, otherwise, every suffix in the balanced sa\textsubscript{lcp}-interval needs to be outputted and we do so by calling $outputMatchesUp(SA[d],w,d-b+1)$ and $outputMatchesDown(\phi^{-1}(SA[d]),x',e-b)$, where $x'$ and $\phi^{-1}(SA[d])$ are computed with $(SA[d], x)$ and $F(I_{\phi^{-1},PLCP})$. In the case where the balanced sa\textsubscript{lcp}-interval of $P[f+1,f+\mathcal{L}]$ is not fully contained in one input interval $(i\neq j)$, we do the following. For the first input interval, $i$, if $L_{first}[i]\neq P[f]$, then the $p_{i+1}-b$ long LEMs starting at $SA[p_{i+1}-1],SA[p_{i+1}-2],\dots,SA[b]$ in the text are outputted in $O(p_{i+1}-b)$ expected time by calling $outputMatchesUp(SA^-[i],SA^-_{\phi}[i],p_{i+1}-b)$. For any middle input interval $o$, $i<o<j$, if $L_{first}[o]=P[f]$, then this run in the BWT is skipped, $o=ND[o]$. Otherwise, if $L_{first}[o]\neq P[f]$, then the long LEMs starting at $SA[p_{o}],SA[p_{o}+1],\dots,SA[p_{o+1}-1]$ are outputted by calling $outputMatchesDown(SA^+[o],SA^+_{index}[o],p_{o+1}-p_o)$. For the last input interval, $k$, if $L_{first}[k]\neq P[f]$, then the $e-p_k+1$ long LEMs starting at $SA[p_k],SA[p_k+1],\dots,SA[e]$ are outputting by calling $outputMatchesDown(SA^+[k], SA^+_{index}[k],e-p_{k}+1)$. Overall, outputting the $occ_{start,f+1}$ long LEMs of the form $P[f+1,g]=T[f',g']$ takes $O(occ_{start,f+1}+r_{P[f+1,f+\mathcal{L}]})$ expected time. Furthermore, for every run of character $P[f]$ intersecting the sa\textsubscript{lcp}-interval of $P[f+1,f+\mathcal{L}]$ except the first one, there is a run of characters $\neq P[f]$. Therefore $r_{P[f+1,f+\mathcal{L}]} = O(occ_{start,f+1})$ Therefore outputting the $occ_{start,f+1}$ long LEMs of the form $P[f+1,g]=T[f',g']$ takes $O(occ_{start,f+1})$ expected time.

Finally, we must compute the balanced sa\textsubscript{lcp}-interval of $P[f,f+\mathcal{L}-1]$. First suppose that the balanced sa\textsubscript{lcp}-interval of $P[f+1,f+\mathcal{L}]$ is nonempty. Then, we use the algorithm described in \Cref{sa-lcp-extension-sect} to obtain the sa\textsubscript{lcp}-interval of $P[f,f+\mathcal{L}]$ in $O(r_{P[f+1,f+\mathcal{L}]})$ time. Now, let the sa\textsubscript{lcp}-interval of $P[f,f+\mathcal{L}]$ be $(\hat{b},\hat{d},\hat{e},SA[\hat{b}],SA[\hat{d}],SA[\hat{e}],\hat{i},\hat{j},\hat{k},\hat{v},\hat{w},\hat{x},\hat{y})$ and the sa\textsubscript{lcp}-interval of $P[f,f+\mathcal{L}-1]$ be $(b',d',e',SA[b'], SA[d'], SA[e'], i', j', k', v',w',x',y')$. These sa\textsubscript{lcp}-intervals differ only by those suffixes of the text whose $lcp$ with $P[f,m]$ has length exactly $\mathcal{L}$. There are exactly $occ_{end,f+\mathcal{L}-1}$ such suffixes. Furthermore, $PLCP[b'] < \mathcal{L}$ and $PLCP[\phi^{-1}(e')] < \mathcal{L}$. Finally, $\forall b'<a\leq \hat{b}$, $LCP[a] = PLCP[SA[a]] \geq \mathcal{L}$, and $\forall \hat{e}\leq a < e'$, $LCP[a+1] = PLCP[SA[a+1]] = PLCP[\phi^{-1}(a)] \geq \mathcal{L}$. Therefore, we initialize $b'= \hat{b}$, $SA[b']=SA[\hat{b}]$, $i'=\hat{i}$, and $v'=\hat{v}$. Then, while $LCP[b'] = PLCP[SA[b']] \geq \mathcal{L}$, we (i) set $i' = i' -1$ if $b' = p_{i'}$, (ii) set $b' = b'-1$, (iii) update $SA[b']$ and $v'$ by $F(I_{\phi,PLCP})$, and (iv) insert the key $SA[b']-f$ into $dict_{occ}$ with value $f+\mathcal{L}-1$. When $LCP[b'] = PLCP[SA[b']] < \mathcal{L}$, the final value $b'$ has been computed. Similarly for $e'$, we initialize $e'=\hat{e}, SA[e'] = SA[\hat{e}], k' = \hat{k},$ and $y'=\hat{y}$. Then, while $LCP[e'+1] = PLCP[SA[e'+1]] = PLCP[\phi(SA[e'])] \geq \mathcal{L}$, we (i) set $k' = k'-1$ if $e'=p_{k'+1}-1$, (ii) set $e'=e'-1$, (iii) update $SA[e']$ and $y'$ by $F(I_{\phi^{-1},PLCP})$, and (iv) insert the key $SA[e']-f$ into $dict_{occ}$ with value $f+\mathcal{L}-1$. When $LCP[e'+1] = PLCP[SA[e'+1]] = PLCP[\phi^{-1}(e')] < \mathcal{L}$, the final value $e'$ has been computed. Overall this takes constant time per suffix added to the interval, therefore $O(occ_{end,f+\mathcal{L}-1})$ time. 

If the balanced sa\textsubscript{lcp}-interval of $P[f+1,f+\mathcal{L}]$ is empty, the balanced sa\textsubscript{lcp}-interval of $P[f,f+\mathcal{L}-1]$ is only nonempty if $MS[f].len = \mathcal{L}$. If it is, we initialize the balanced sa\textsubscript{lcp}-interval of $P[f,f+\mathcal{L}-1]$ to $(\hat{b} = MS[f].row,\hat{d} = MS[f].row,\hat{e}=MS[f].row,SA[\hat{b}]=MS[f].suff,SA[\hat{d}]=MS[f].suff,SA[\hat{e}]=MS[f].suff,\hat{i}=MS[f].i,\hat{j}=MS[f].i,\hat{k}=MS[f].i,\hat{v}=MS[f].w,\hat{w}=MS[f].w,\hat{x}=MS[f].x,\hat{y}=MS[f].x)$ and insert the key $MS.suff - f$ into $dict_{occ}$ with value $f+\mathcal{L}-1$. Then, the interval is expanded to the sa\textsubscript{lcp}-interval of $P[f,f+\mathcal{L}-1]$ in $O(occ_{end,f+\mathcal{L}-1})$ time as in the other case. 

In the case where the balanced sa\textsubscript{lcp}-interval of $P[f+1,f+\mathcal{L}]$ is empty, long sa\textsubscript{lcp}-interval advancement is performed in $O(occ_{end,f+mathcal{L}-1})$ expected time . If it is not empty, the algorithm we have described first performs sa\textsubscript{lcp}-interval extension, obtaining the sa\textsubscript{lcp}-interval of $P[f,f+\mathcal{L}]$ in $O(r_{P[f+1,f+\mathcal{L}]})$ time and then takes $O(occ_{end,f+mathcal{L}-1})$ expected time to compute the sa\textsubscript{lcp}-interval of $P[f,f+\mathcal{L}-1]$ from the sa\textsubscript{lcp}-interval of $P[f,f+\mathcal{L}]$. Finally, $r_{P[f+1,f+\mathcal{L}]}=O(occ_{end,f+mathcal{L}-1})$. Therefore, the algorithm described here performs the long sa\textsubscript{lcp}-interval advancement in $O(occ_{start,f+1} + occ_{end,f+\mathcal{L}-1})$ expected time. 

\subsection{Time Complexity}
If the above algorithm is iterated from $f=m\to 0$, all long MEMs of the pattern with respect to the text are outputted. The time complexity of the algorithm is the sum of the time complexity of the $m$ long sa\textsubscript{lcp}-interval advancements. Note that the sum of $occ_{start,f+1}$ for $f=m\to 0$ is $occ$ and the sum of the $occ_{end,f+\mathcal{L}-1}$ for $f=m\to 0$ is also $occ$. Therefore, the time complexity of the algorithm overall is $O(m+occ)$ expected time. The algorithm takes $O(r)$ space for the OptBWTRL and $O(occ)$ space for maintaining the dynamic dictionary~\cite{bender2022optimal}. Also note that if a deterministic time bound is desired, this algorithm runs in $O\left(m+occ\sqrt{\frac{\log occ}{\log\log occ}}\right)$ time with the same space by replacing the dictionary with a deterministic dictionary implemented by exponential search trees~\cite{obit_10.1145/2447712.2447737,thorup_10.1145/1236457.1236460}. Recall these complexities are when given the modified matching statistics. A linear time algorithm for computing matching statistics in $O(r)$ space is not known. However, note that since the values of matching statistics are only needed for positions $i$ where $MS[i].len=\mathcal{L}$, a straightforward algorithm for long LEM query follows from our algorithm in $O(m\mathcal{L}\log\log_w\sigma+occ)$ expected time when matching statistics are not given as input. This algorithm is obtained by computing the sa\textsubscript{lcp}-interval of each $P[i,i+\mathcal{L}-1]$ independently in $O(\mathcal{L}\log\log_w\sigma)$ time using the standard count algorithm described by Nishimoto and Tabei~\cite{nishimoto_et_al:LIPIcs.ICALP.2021.101} followed by performing the long LEM query described here. Lastly, note that the long LEM query algorithm described here results in an $O(m+occ)$ expected time long LEM query algorithm in uncompressed string indexes since algorithms for $O(m)$ time matching statistics computation are known in uncompressed space.

\section{Discussion}
In this paper, we have described OptBWTRL, a modification of OptBWTR by Nishimoto and Tabei~\cite{nishimoto_et_al:LIPIcs.ICALP.2021.101}. OptBWTRL adds the ability to compute $PLCP$ and $\phi$ in constant time with additional move data structures. It also retains a space complexity of $O(r)$ words. We also define locally maximal exact matches (LEMs), a match that cannot be simultaneously extended in the pattern and the text instead of one that is only unable to be extended in the pattern (MEMs). Finally, we describe an algorithm for outputting all LEMs with length at least $\mathcal{L}$ in $O(m+occ)$ expected time given an OptBWTRL of the text and the matching statistics of the pattern with respect to the text. Note that this doesn't result in a linear time algorithm for computing long LEMs in $O(m+occ)$ expected time in $O(r)$ space because an algorithm for computing matching statistics of a pattern with respect to a text in linear time in $O(r)$ space is not known. A deterministic bound for our long LEM query algorithm is $O\left(m+occ\sqrt{\frac{\log occ}{\log\log occ}}\right)$. Finally, our long LEM query admits a direct computation of long LEMs in $O(m\mathcal{L}+occ)$ expected time without being provided matching statistics as input. This algorithm may be faster than computation of matching statistics followed by $O(m+occ)$ long LEM query in some cases, especially when $\mathcal{L}$ is small.

It is likely that the move data structures $F(I_{\phi,PLCP})$ and $F(I_{\phi^{-1},PLCP})$ can be merged into one data structure that still takes $O(r)$ space and computes $\phi,\phi^{-1},PLCP[i],$ and $PLCP[\phi^{-1}(i)]$ in constant time in one data structure. This would greatly reduce the number of samples needed per input interval $F(I_{LF})$. It would also allow bidirectional movement in the $SA$ with one input interval index. This is left as future work. Possible other future work includes a practical implementation of the structures and algorithms described here, possibly as a modification of MOVI or b-move~\cite{zakeri2024movi,bmovedepuydt_et_al:LIPIcs.WABI.2024.10}. Thirdly, the ability to compute $PLCP$ in constant time may speed up matching statistics computation in compressed space. The intuition is that when $MS[i].len \leq MS[i+1].len$, then when $MS[i].len$ is large, the sa-interval of $P[i,i+MS[i].len-1]$ is small and is faster to compute with $PLCP$ and $\phi$ and $\phi^{-1}$ than with reverse $LF$. When $MS[i].len$ is small, computing the sa-interval is faster with reverse $LF$. In Heng Li's forward-backward algorithm~\cite{liforwardbackward10.1093/bioinformatics/bts280}, the new sa-interval is always computed by reverse $LF$. Computing the sa-interval with $PLCP$ and reverse $LF$ simultaneously is likely to be faster in practice than reverse $LF$ alone while retaining the same worst case time complexity. The author's are currently exploring this idea. Furthermore, variable length threshold long LEMs may be useful. I.e. output long LEMs that are $x\%$ of the length of the MEMs in the same area. The authors believe a linear time algorithm for this or a similar problem given matching statistics exist. Finally, applications utilizing the long LEMs of a pattern with respect to the text is a possible fruitful direction for future work. Particularly in biological applications.

Long LEMs may have many biological applications. In general, in any application where long MEMs are used, long LEMs may also be used. Note that MEMs are a subset of LEMs and long MEMs are a subset of long LEMs. For example, in biobank scale haplotype datasets, long matches (long LEMs) in the PBWT have revealed genealogical relationships that set maximal matches (MEMs) are not able to uncover. As the compressive power of compressed string indexes increases and the number of variants in biobank scale whole genome sequencing data increases, storing unaligned genomes becomes more viable. In that case, algorithms for outputting long LEMs are needed to replace the long match algorithms in the PBWT. These matches have many applications from identity by descent segment detection, haplotype phasing, haplotype imputation, inferring genealogical relationships, and ancestry inference. Utilizing unaligned matches from a large collection of haplotype sequences instead of aligned matches from haplotypes aligned to a linear reference genome may also reduce reference bias. Finally, novel applications for long LEMs may exist, long LEMS may be used as seeds for seed and extend algorithms. They may be used as anchors for approximate matching matching algorithms~\cite{chiragdoi:10.1089/cmb.2022.0266} possibly for long read alignment to either a reference pangenome or a linear reference genome~\cite{10.1093/bioinformatics/bts414}. Lastly, genome to genome or genome to pangenome long LEMs detection may find similar sequences in the genomes on different genomic regions. MEMs detection may miss these similar sequences on different genomic regions because these matches will typically be overshadowed by larger encompassing matches that occur in roughly the same region in the pattern and the text. The long LEMs may therefore reveal old structural variants that MEMs and general alignment algorithms are both unable to reveal. MEMs don't reveal these variants due to looking for only the largest matches in a region on the pattern while alignment algorithms don't due to better alignments existing in closeby genomic regions or alignment algorithms being too computationally expensive to run on very large datasets.

Overall, we have provided a linear time algorithm for outputting all long LEMs of a pattern with respect to a text in BWT runs compressed space given the matching statistics of the pattern with respect to the text. We have also applied the move data structure of Nishimoto and Tabei to computation of $PLCP$ in constant time. Therefore, we can compute $LCP[i]$ given $SA[i]$ in constant time. We apply these results to modify the OptBWTR, creating OptBWTRL. OptBWTRL is an $O(r)$ space data structure that computes $\phi$ and $PLCP$ in constant time and long LEMs in linear time given matching statistics. These algorithms also result in a linear time long LEM query algorithm in uncompressed string indexes.



\bibliography{main}

\end{document}